\documentclass[11pt]{article}
\makeatletter
\providecommand{\@AC}[2]{}
\makeatother
\usepackage[utf8]{inputenc}
\usepackage{amsmath, amsfonts, amssymb, amsthm,  graphicx, mathtools, enumerate, float}
\usepackage[round]{natbib}
\usepackage[CJKbookmarks=true,
            bookmarksnumbered=true,  
			bookmarksopen=true,
			colorlinks=true,
			citecolor=blue,
			linkcolor=blue,
			anchorcolor=red,
			urlcolor=blue]{hyperref}
\usepackage{hyperref}
\usepackage{xcolor}
\usepackage{array}
\usepackage{thmtools}
\usepackage{thm-restate}
\usepackage{mathrsfs}
\usepackage{todonotes}
\newtheorem{thm}{Theorem}[section]
\newtheorem{remark}{Remark}[section]
\newtheorem{lemma}{Lemma}[section]

\newtheorem{proposition}{Proposition}[section]

\usepackage{caption}
\usepackage{subcaption}

\usepackage[blocks, affil-it]{authblk}
\usepackage{mathrsfs}
\usepackage{todonotes}
\usepackage[letterpaper, left=1.2truein, right=1.2truein, top = 1.2truein, bottom = 1.2truein]{geometry}
\usepackage[ruled, vlined, lined, commentsnumbered]{algorithm2e}
\renewcommand{\H}{\mathcal{H}}
\newcommand{\argmin}{\textnormal{argmin}}
\newcommand{\argmax}{\textnormal{argmax}}
\newcommand{\tol}{q}
\renewcommand{\th}{\textnormal{th}}
\newcommand{\lfdr}{\textnormal{lfdr}}
\renewcommand{\P}{\mathbb{P}}
\newcommand{\E}{\mathbb{E}}
\newcommand{\bFDR}{\textnormal{bFDR}}
\newcommand{\FDR}{\textnormal{FDR}}

\title{Adaptive procedures for boundary FDR control}
\author[1]{Sarah Mostow}
\author[2]{Daniel Xiang}
\affil[1]{Department of Statistical Science, Duke University}
\affil[2]{Department of Statistics, University of Chicago}
\date{\today}

\begin{document}

\maketitle

\begin{abstract}
    A cornerstone of the multiple testing literature is the Benjamini-Hochberg (BH) procedure, which guarantees control of the FDR when $p$-values are independent or positively dependent. While BH controls the average quality of rejections, it does not provide guarantees for individual discoveries, particularly those near the rejection threshold, 
    which are more likely to be false than the average rejection.
    For independent $p$-values with Uniform$(0,1)$ null distribution, the Support Line procedure (SL, \cite{soloff2024edge}) provably controls the error probability for the rejection at the edge of the discovery set (i.e. the one with largest $p$-value) at level $\tol m_0/m$, where $m_0$ is the number of true null hypotheses and $\tol$ is a tuning parameter. In this work, we study adaptive versions of the SL procedure that operate in two steps: the first step estimates $m_0$ from non-significant statistics, and the second step runs the SL procedure at an adjusted level $\tol m / \hat{m}_0$. The adaptive procedures are shown to control the false discovery probability for the ``boundary'' rejection under an independence assumption. 
    Simulation studies suggest that some but not all of the two-stage procedures maintain error control under positive dependence,
    and that substantial power is gained relative to the original SL procedure. 
    We illustrate differences between the procedures on meta-data from the recent literature in behavioral psychology on growth mindset and nudge interventions. 
\end{abstract}

\section{Introduction}

Statisticians have long deliberated how to determine, given a list of $p$-values arising from a simultaneous multiple testing experiment, which discoveries to further pursue.
Testing each null hypothesis at the traditional per-comparison error rate (PCER) $\alpha=0.05$ ignores the multiple testing problem and has led to replicability issues in psychological research
(\cite{open2015estimating,benjamin2018redefine}). Controlling the family-wise error rate (FWER) resolves the type 1 error inflation issue by using a more stringent rejection threshold,
but can be overly conservative in scientific applications
(see e.g. \cite{reiner2003identifying,risch1996future}). 
Middle ground is attained by the False Discovery Rate (FDR, \cite{benjamini1995controlling}), which measures the expected proportion of the discovery set that are false positives. By tolerating a fixed proportion of false rejections, the FDR approach allows the analyst to reject more null hypotheses than they would with a family-wise correction, while maintaining a meaningful form of type 1 error control. 

While the FDR approach corrects for multiplicity,
it also suffers from the “free-rider” problem: a large \textit{p}-value may be included in the rejection set not on its own merits but simply due to the presence of a few very small \textit{p}-values in the dataset. 
\cite{finner2001false} describe the ``problematic nature of the FDR concept", where by including some $p$-values that are known to exhibit strong evidence, the analyst may test their favored hypothesis at a more lenient significance level. 
\cite{soloff2024edge} proposed a correction using the Support Line (SL) procedure that controls the probability that its last rejection (the one with the largest $p$-value) is a false positive, below a user-specified level $\tol$. 
The right panel of Figure \ref{fig:bFDR-calibration} plots the false discovery probability of the last rejection, as a function of the tolerance parameter $0\leq \tol\leq 1$ for the BH (red) and SL (blue) procedures in a simulation setting from \cite{benjamini2006adaptive}. For $\tol=0.2$, the last discovery of the FDR procedure is over 50\% likely to be false, 
while the last discovery of the SL procedure at level $\tol=0.4$ is false only 20\% of the time. 
This error criterion, the probability that the last discovery is false, is called the boundary false discovery rate (bFDR, \cite{soloff2024edge,xiang2025frequentist}), and is reviewed along with the SL procedure in Section~\ref{sec:bFDR-background}.

Under independence assumptions, the procedures (BH, SL) control their respective error rates (FDR, bFDR) at level $\pi_0 \tol$, where $\pi_0$ denotes the proportion of null hypotheses that are true and $\tol$ is a tuning parameter encoding the analyst's tolerance for type 1 errors. Given $\pi_0$, these procedures could be run at level $\tol/\pi_0$, achieving the type 1 error level $\tol$ more closely while improving power. A substantial body of work has sought to incorporate an estimate of $\pi_0$ into FDR-controlling procedures by developing adaptive methods (see e.g. \cite{benjamini2000adaptive,storey2002direct,benjamini2006adaptive,gao2025adaptive}), which we review in Section \ref{sec:adaptive-FDR}. 
The general strategy is to construct a data-driven estimate $\hat{\pi}_0>0$ and then apply BH at level $\tol/\hat{\pi}_0$. Roughly speaking, if $1/\hat{\pi}_0$ is smaller than $1/\pi_0$ on average and $\pi_0$ is not too close to 1, this strategy increases power while maintaining $\FDR \leq \tol$. 

\begin{figure}[t]
    \centering
    \includegraphics[width=\linewidth]{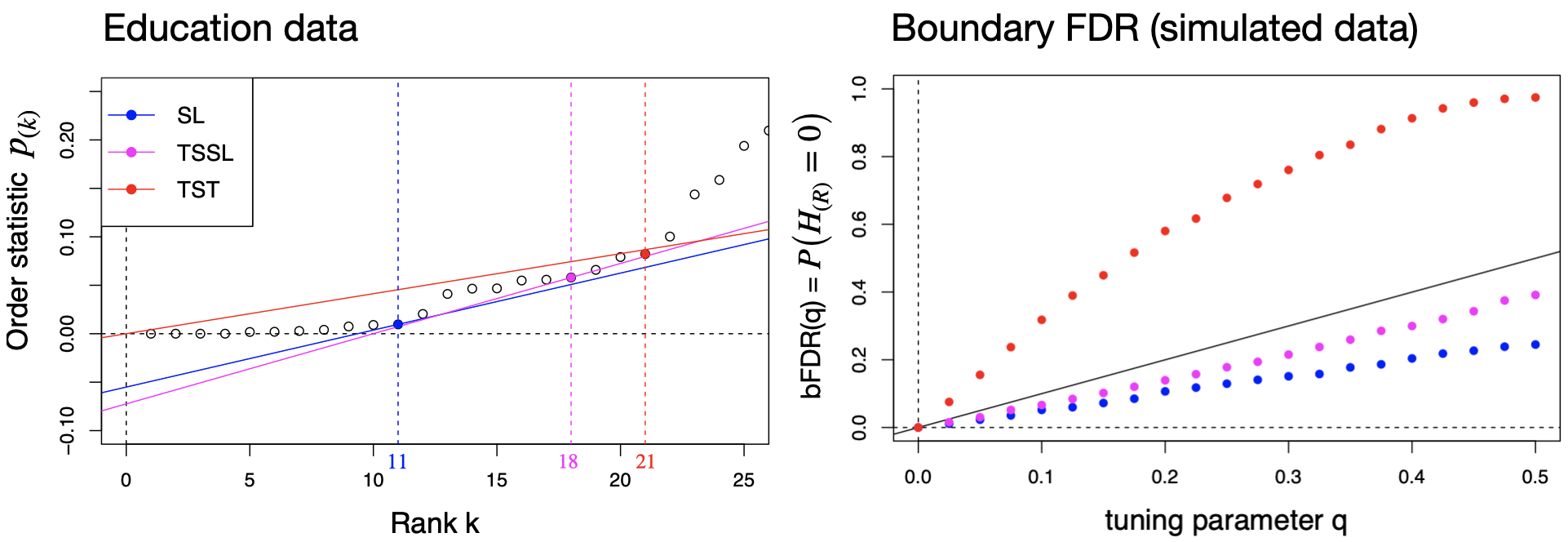}
    \caption{In the left panel, the TST$(\tol')$, SL$(\ell)$ and TSSL$(\ell')$ procedures are illustrated on an educational dataset from \cite{williams1999controlling}, with $\tol=0.1$ and $\ell=0.2$. 
    In the right panel, we plot the bFDR curve for the procedures in the ``1234 configuration'' simulation setting of \cite{benjamini2006adaptive}, where $\pi_0=1/2$ and non-null $p$-values are computed from a one-sided Gaussian location test with mean $\mu$ equal to 1, 2, 3, or 4 with equal weight. }
    \label{fig:bFDR-calibration}\
\end{figure}

In this paper, we study methods for bFDR control that similarly modify the SL procedure with an estimate of $\pi_0$. 
We provide theoretical results on their finite-sample bFDR control (Section \ref{sec:main}) and simulation studies demonstrating their performance gains relative to the standard SL procedure and its Storey-adjusted variant for independent and positively dependent test statistics (Section \ref{sec:simulation}). The Storey-type adjustment to SL with $\lambda=1/2$ works well under independence, but behaves erratically when there is positive dependence. Empirically, we find that using a smaller tuning parameter, e.g. $\lambda=0.2$, approximately preserves bFDR control for positively dependent statistics; a similar phenomenon was observed by \cite{blanchard2009adaptive} for the Storey-BH method.

Figure \ref{fig:bFDR-calibration} illustrates the Two-stage SL procedure (TSSL), an adaptive bFDR method that estimates the number of true null hypotheses by the number of non-significant $p$-values according to an initial run of the SL procedure. 
This procedure is illustrated along with the standard (unadjusted) SL procedure, and the Two-stage Step-up BH procedure (TST, \cite{benjamini2006adaptive}) on an educational dataset analyzed by \cite{benjamini2000adaptive} in the left panel of Figure \ref{fig:bFDR-calibration}. The TSSL procedure (with bFDR tolerance set to $\tol=0.2$) makes more rejections than SL, but still fewer than the adaptive BH procedure with FDR tolerance $\tol=0.1$. This reflects a general trend in our findings, that when the null proportion is not too large, the adaptive procedures lead to substantial power gains relative to the standard SL procedure.




\section{Background}
\label{sec:background}

\subsection{Boundary FDR and SL procedure}
\label{sec:bFDR-background}

Let $H_1,\dots,H_m$ denote the null hypotheses, $\mathcal{H}_0 \coloneqq \{i:H_i\text{ is true}\}$ the true null set, and suppose $p_1,\dots,p_m$ are random variables satisfying $p_i \sim \text{Uniform}(0,1)$ when $H_i$ is true. The Support Line procedure (SL, \cite{soloff2024edge}) is implemented as follows. First, compute
\begin{align}
\label{eq:sl}
    R = \argmax_{k=0,\dots,m} \left\{\frac{\tol k}{m}- p_{(k)} \right\}, 
\end{align}
and then let the index set of rejected null hypotheses be denoted $\mathcal{R}\coloneqq \{i: p_i \leq p_{(R)}\}$, where $p_{(0)}\coloneqq 0$. 
If $(p_i)_{i \in \H_0}$ are independent of each other and of $(p_i)_{i\notin \H_0}$, then
\begin{align}
\label{def:bfdr}
    \bFDR(\mathcal{R}) \coloneqq \P(H_{(R)}\text{ is true}, \; R > 0) = \pi_0 \tol,
\end{align}
where $\pi_0 \coloneqq |\H_0|/m$ and the notation $H_{(1)},\dots,H_{(m)}$ means the null hypotheses ordered according to $p_{(1)}\leq \dots \leq p_{(m)}$, where $H_{(0)}$ is false by definition. Note that while the truth statuses of $H_1,\dots,H_m$ may be fixed to begin with, once they are re-ordered according to $p_{(1)}\leq \dots \leq p_{(m)}$, the truth status of $H_{(i)}$ is a Bernoulli random variable whose `success' probability depends on the underlying $p$-value densities \citep{genovese2002operating}. 


A geometric interpretation of the SL procedure is illustrated in the left panel of Figure \ref{fig:bFDR-calibration}. On the plot of $p$-value order statistics, one can imagine pulling a line of slope $\tol/m$ up from height $-\infty$ until the plot of order statistics lays tangent to this line. The $x$-axis value at the point of tangency, where the plot is ``supported'' by this line, is the number of rejections made by SL.



\subsection{Adaptive procedures for FDR control}
\label{sec:adaptive-FDR}

\cite{benjamini2000adaptive} proposed the Lowest Slope (LSL) procedure to estimate $m_0$, the number of true null hypotheses. The method computes slopes $S_i$ of lines through $(m+1,1)$ and $(i,p_{(i)})$ for ordered $p$-values $p_{(1)}\leq p_{(2)}\leq \dots \leq p_{(m)}$, stopping at the first decrease in slope to yield $\hat{m}_0 = \min\{\lceil 1/S_{i} \rceil,m\}$ (a detailed description is provided in Section \ref{sec:main}). The resulting plug-in BH procedure is the original method designed for adaptive FDR control.

\cite{schweder1982plots} proposed an estimate of $\pi_0$  that is proportional to the number of $p$-values above a fixed threshold $\lambda$, a method later popularized by \cite{storey2002direct}. \cite{dohler2023unified} presented a general framework for $\pi_0$ estimation that extends to the discrete $p$-value setting, establishing rigorous plug-in FDR control for the Storey estimator and related methods. 
\cite{gao2025adaptive} further refined the Storey estimator by introducing data-driven stopping rules for choosing $\lambda$ adaptively while preserving FDR control. 

The LSL estimate of $\pi_0$ is equivalent to an adaptive Storey adjustment with a data-dependent tuning parameter $\lambda = p_{(i)}$, since
\begin{align*}
    \hat{\pi}_0^\lambda \coloneqq \frac{1+\# \{i:p_i > \lambda\}}{m(1-\lambda)} \text{ with }\lambda=p_{(i)}\Rightarrow \hat{\pi}_0^{\lambda} = \frac{1+m-i}{m(1-p_{(i)})}
\end{align*}
is proportional to the reciprocal of the slope $S_i$ at step $i$. The LSL null proportion estimator is then equivalent to $\hat{\pi}_0^{\text{LSL}} = \min\{\hat{\pi}_0^{p_{(j)}},1\}$, where $j$ is the first time $\hat{\pi}_0^{p_{(i)}}$ stops decreasing, starting from $i=1$. 
For independent $p$-values, this procedure's FDR guarantee can be approximately justified using the martingale proof technique used by \cite{gao2025adaptive}.

\cite{benjamini2006adaptive} proposed the \textit{two-stage linear step-up} (TST) procedure and studied it alongside other adaptive FDR procedures. The TST method estimates $m_0$ as the number of non-significant $p$-values from BH run at a reduced level 
in the first stage, and re-runs BH at an adjusted level 
in the second stage. The TST procedure was shown to control FDR under independence, and also numerically to be robust to positive correlations between test statistics. Our goal in the remaining sections is similar: to derive the bFDR guarantee for a \textit{two-stage support line} (TSSL) procedure, defined in the next section, and to compare the adaptive SL procedures defined by various estimators of $\pi_0$ in simulations and on real data.


\section{The adaptive bFDR procedures}
\label{sec:main}

\paragraph{Two-stage SL procedure.} The two-stage support line (TSSL) procedure with tuning parameter $\tol$ is defined as follows.
\begin{enumerate}
    \item Run SL at level $\tol$, i.e. compute
    \begin{align*}
        R_1 \coloneqq \argmax_{k=0,\dots,m}\left\{\frac{\tol k}{m} -p_{(k)} \right\}.
    \end{align*}
    If $R_1=0$, stop and reject nothing; if $R_1=m$ reject all $m$ hypotheses and stop; otherwise, go to Step 2.
    \item Re-run SL at level $\tol m/(m-R_1)$, i.e. compute
    \begin{align*}
        R_2 \coloneqq \argmax_{k=0,\dots,m} \left\{\frac{\tol k}{m-R_1}-p_{(k)} \right\},
    \end{align*}
    and reject $H_{(1)},\dots,H_{(R_{2})}$.
\end{enumerate}
The procedure is equivalent to estimating $\pi_0$ using the proportion of non-rejections at the first stage, and re-running SL with the adjusted tuning parameter $\tol/\hat{\pi}_0$ in the second stage. This two-stage SL procedure is the `local' analogue to the TST procedure in \cite{benjamini2006adaptive} and controls its boundary FDR below an inflated error level, as our next result shows. We remark that this result is analogous to Theorem 1 in \cite{benjamini2006adaptive}, in which FDR control is proven for the two-stage BH procedure. A proof of Theorem \ref{thm:bfdr-main} is recorded in Section \ref{sec:proofs}.

\begin{thm}
\label{thm:bfdr-main}
    Let $H_1,\dots,H_m$ denote $m$ null hypotheses, with independent $p$-values $p_1,\dots,p_m$. Suppose that $p_i \sim \text{Uniform}(0,1)$ if $H_i$ is true. Then
    \begin{align*}
        \bFDR(\mathcal{R}_2) \leq \frac{\tol}{1-\tol},
    \end{align*}
where $\mathcal{R}_2 \coloneqq \{i : p_i \leq p_{(R_2)}\}$ is the rejection set for the two stage procedure, and the $\bFDR$ of a rejection set is defined in \eqref{def:bfdr}.
\end{thm}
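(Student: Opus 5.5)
Our plan follows the leave-one-out argument behind the SL guarantee \eqref{def:bfdr} and Theorem 1 of \cite{benjamini2006adaptive}. We decompose the boundary event by which null sits at the boundary:
\begin{align*}
\bFDR(\mathcal{R}_2)=\sum_{i\in\H_0}\P\big(H_{(R_2)}=H_i,\;R_2>0\big).
\end{align*}
Fix $i\in\H_0$ and condition on $p^{(-i)}=(p_j)_{j\neq i}$. By independence, $p_i$ is then still Uniform$(0,1)$. The goal is to show that, conditionally, the event that $p_i$ is the boundary rejection has probability at most $\tol/\big(m(1-\tol)\big)$, or at most $\tol/(m-\hat R)$ for a leave-one-out quantity $\hat R$ that we can then control.

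\textbf{Step 1: support-line structure of the event.} The SL argmax at slope $s$ is attained at the vertex where the line of slope $s$ supports the lower convex hull of the points $\{(k,p_{(k)})\}$. Suppose $p_i$ is the boundary point at slope $s$. Inserting $p_i$ raises the rank of every larger point by one. So, given $p^{(-i)}$, the set of $p_i$ values for which $p_i$ becomes the support point at slope $s$ is an interval. This interval lies between two tangency heights determined by $p^{(-i)}$, and its length is exactly $s$. This length-$s$ property is the mechanism behind $\P=\tol/m$ per null in plain SL.

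For TSSL the slope is $s=\tol/(m-R_1)$, and $R_1$ itself depends on $p_i$. We would show that whenever $p_i$ is the stage-2 boundary, $p_i$ is also rejected at stage 1: stage 2 uses a larger slope, so its support point lies weakly to the right of the stage-1 one. On that event we replace $R_1$ by $R_1^{(i)}$, the stage-1 SL count computed with $p_i$ set to $0$. This quantity depends only on $p^{(-i)}$, and we would argue that $R_1=R_1^{(i)}$ whenever $p_i$ is rejected at stage 1. That argument uses the same convex-hull monotonicity: moving a rejected point down does not change which later vertex supports the line. With $R_1$ frozen, the conditional probability is at most $\tol/(m-R_1^{(i)})$.

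\textbf{Step 2: averaging over the nulls.} Write $R_1^{(i)}=1+\tilde R^{(i)}$, where $\tilde R^{(i)}$ is the number of rejections among the other $m-1$ $p$-values at a suitable SL level. This gives
\begin{align*}
\bFDR(\mathcal R_2)\le \sum_{i\in\H_0}\E\Big[\frac{\tol}{m-1-\tilde R^{(i)}}\Big].
\end{align*}
Following \cite{benjamini2006adaptive}, we would then bound $m-1-\tilde R^{(i)}$ from below by the number of non-rejected nulls other than $i$. That count is at least a Binomial-like count of nulls above the first-stage threshold. Applying $\E[1/(1+\mathrm{Bin}(n,1-\tol))]\le 1/\big((n+1)(1-\tol)\big)$ with $n=m_0-1$ yields $\le \tol/\big(m_0(1-\tol)\big)$ per null. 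Summing over the $m_0$ nulls then gives $\tol/(1-\tol)$.

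\textbf{Main obstacle.} The binomial step is the hardest part. The SL stage-1 rejection threshold $p_{(R_1)}$ is not bounded by $\tol$ in an obvious way, although it does satisfy $p_{(R_1)}\le \tol R_1/m\le\tol$. So each null is non-rejected whenever its $p$-value exceeds $\tol$, an event of probability $1-\tol$. Because $\tol R_1/m\le \tol$, the number of non-rejected nulls other than $i$ dominates $\mathrm{Bin}(m_0-1,1-\tol)$, which is the domination we need. What remains delicate is the leave-one-out bookkeeping in Step 1, namely verifying $R_1=R_1^{(i)}$ on the relevant event. We would check this first, since it carries the whole argument.
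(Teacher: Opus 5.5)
Your overall architecture (leave-one-out conditioning on $p^{(-i)}$, a conditional bound of the form $\tol/(m-\hat R)$, then a binomial inverse-moment bound) matches the paper. However, the leave-one-out quantity you choose is wrong, and Step 1 rests on two claims that are false for SL.

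\textbf{Step 1: both claims fail.} First, the direction is reversed. Since the stage-2 slope is $\tol/(m-R_1)\ge\tol/m$, you correctly get $R_2\ge R_1$, but this places the stage-2 boundary at or \emph{above} $p_{(R_1)}$. So a null at the TSSL boundary is either the stage-1 boundary (when $R_1=R_2$), or, on $\{R_1<R_2\}$, it is \emph{not} rejected at stage 1. The event $\{R_1<R_2\}$ is exactly where adaptivity creates extra risk.

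Second, freezing $R_1$ by sending a rejected $p_i$ to $0$ is a step-up (BH) property that SL lacks. Lowering a rejected point shifts all smaller $p$-values one rank to the right, and this can create a new supporting vertex to the left. Take $m=4$, $\tol=0.4$, $p$-values $(0.01,0.15,0.19,0.9)$ and $p_i=0.19$:
\begin{itemize}
\item The stage-1 criteria $\tol k/m-p_{(k)}$ are $(0,0.09,0.05,0.11,-0.5)$, so $R_1=3$.
\item At slope $0.4$ the criteria are $(0,0.39,0.65,1.01,0.7)$, so $R_2=3$ and $p_i$ is the boundary.
\item With $p_i=0$ the criteria become $(0,0.1,0.19,0.15,-0.5)$, so $R_1^{(i)}=2\neq R_1$.
\end{itemize}

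The invariance that does hold is Lemma~\ref{lem:p-to-one}: a \emph{non-rejected} $p$-value can be moved up to $1$ without changing $R_1$. The paper therefore splits over $\{R_1=R_2\}$ and $\{R_1<R_2\}$.
\begin{itemize}
\item On $\{R_1=R_2\}$, $p_m$ is the stage-1 boundary. Its region $\bigcup_k I_k$ has length at most $\tol/m$, and $R_1$ equals the rank of $p_m$, so it genuinely varies with $p_m$.
\item On $\{R_1<R_2\}$, $R_1$ is frozen at $R_1(1)$, and $p_m$ lies in the boundary region $\bigcup_k I_k'$ at the frozen slope $\tol/(m-R_1(1))$ but outside $\bigcup_k I_k$.
\end{itemize}
Comparing $I_k$ with $I_k'$ shows the two events together have conditional probability at most $\sum_k|I_k'|=\tol/(m-R_1(1))$.

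\textbf{Step 2: an off-by-one.} With $p_i$ set to $0$ it is rejected, so $m-R_1^{(i)}=m-1-\tilde R^{(i)}$ counts only non-rejections among the other $m-1$ $p$-values. This is $\ge U$, the number of non-rejected nulls other than $i$, but not $\ge 1+U$. Hence $\E[1/(1+\mathrm{Bin}(m_0-1,1-\tol))]$ is not the relevant quantity. In fact $R_1^{(i)}$ can equal $m$: this happens when all other $p$-values are tiny and roughly evenly spaced, which has positive probability under the global null. There $\tol/(m-R_1^{(i)})$ is infinite.

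Freezing at $p_i=1$ instead makes $p_i$ itself a non-rejected null, because SL never rejects above $p_{(R_1)}\le\tol$. This supplies exactly the missing $+1$: $m-R_1(1)\ge 1+U$, with $U$ stochastically larger than $\mathrm{Bin}(m_0-1,1-\tol)$. Lemma~\ref{lem:expectation-bound} then gives $\E[\tol m_0/(m-R_1(1))]\le\tol/(1-\tol)$. Your binomial domination via $p_{(R_1)}\le\tol R_1/m\le\tol$ is the same as the paper's and is fine once the correct leave-one-out quantity $R_1(1)$ is used.
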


\begin{remark}
    To control bFDR below $\tol$, one can run the TSSL procedure at the reduced level $\tol'\coloneqq \tol/(1+\tol)$ instead of $\tol$ in both stages. We note that this is analogous to the TST procedure of \cite{benjamini2006adaptive}, where BH is run at the reduced level $\tol'$ in both stages in order to (provably) control FDR below $\tol$.
\end{remark}



\paragraph{Adaptive Storey adjustment.} The Adaptive Storey (AS) null proportion estimator \citep{gao2025adaptive} is defined:
\begin{align*}
    \hat{\pi}_0 \equiv \hat{\pi}_0(p_1,\dots,p_m) := \frac{1+\# \{i\leq m: p_i > \hat{\lambda}\}}{m(1-\hat{\lambda})},
\end{align*}
where $\hat{\lambda}(p_1,\dots,p_m) \geq \tol$ is any stopping time with respect to the filtration:
\begin{align}
\label{eq:filtration}
    \mathcal{F}_t = \sigma\Big(\sum_{i=1}^m 1\{p_i>s\}: \tol \leq s \leq t\Big), \hspace{2em} t\geq \tol.
\end{align}
Consider the SL procedure with tuning parameter $\tol / \hat{\pi}_0$ defined in the following way:
\begin{align*}
    R_{\tol/\hat{\pi}_0} = \argmin_{k:p_{(k)}\leq \tol} \Big\{ p_{(k)} - \frac{\tol k}{\hat{\pi}_0 m} \Big\}.
\end{align*}
Our next result shows that this adjustment preserves bFDR control for any valid stopping time $\hat{\lambda}$. The proof is adapted from arguments in \cite{soloff2024edge} and \cite{gao2025adaptive}, and is recorded in Section \ref{sec:proofs}.
\begin{proposition}
\label{thm:AS-bFDR-control}
    Let $H_1,\dots,H_m$ be null hypotheses, and suppose $p_1,\dots,p_m$ are independent $p$-values satisfying $p_i \sim \text{Uniform}(0,1)$ when $H_i$ is true. If $\hat{\lambda}$ is a valid stopping time, then
\begin{align*}
    \bFDR(\mathcal{R}_{\tol/\hat{\pi}_0}) \leq \tol.
\end{align*}
\end{proposition}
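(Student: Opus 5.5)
The plan is a leave-one-out argument in the style of \cite{soloff2024edge}, combined with the observation that $\hat{\pi}_0$ is unaffected by a small null $p$-value. That reduces the bound to an expectation of $1/\hat{\pi}_0$, which is handled by the optional-stopping argument of \cite{gao2025adaptive}.

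\textbf{Step 1 (decomposition).} Write $R=R_{\tol/\hat{\pi}_0}$. Then
\begin{align*}
\bFDR(\mathcal{R}_{\tol/\hat{\pi}_0})=\sum_{i\in\H_0}\P\big(H_{(R)}=H_i,\ R>0\big).
\end{align*}
Since the argmin is restricted to indices with $p_{(k)}\le \tol$, the event $\{H_{(R)}=H_i\}$ forces $p_i\le \tol$.

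\textbf{Step 2 (freezing $\hat{\pi}_0$).} The filtration \eqref{eq:filtration} only involves the counts $\sum_j 1\{p_j>s\}$ for $s\ge \tol$, and $\hat{\lambda}\ge\tol$. Hence on $\{p_i\le\tol\}$ these counts, and therefore $\hat\lambda$ and $\hat\pi_0$, are the same as when $p_i$ is replaced by $0$. Let $\hat{\pi}_0^{(-i)}$ denote the estimator computed from $(p_1,\dots,p_{i-1},0,p_{i+1},\dots,p_m)$; it is measurable with respect to $p_{-i}$. On the event of interest, $R$ therefore coincides with the SL index at the fixed (given $p_{-i}$) level $c_i:=\tol/\hat\pi_0^{(-i)}$, again restricted to $p_{(k)}\le\tol$.

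\textbf{Step 3 (conditional probability that $H_i$ is the boundary).} Condition on $p_{-i}$ and use the edge characterization from \cite{soloff2024edge}. With the other points fixed, $H_i$ is the support point of the line of slope $c_i/m$ exactly when $p_i-c_i\,r_i/m$ lies below the minimum of $p_{(k)}-c_ik/m$ over the other points. The set of such $p_i$ values is an interval of Lebesgue length at most $c_i/m$. Since $p_i\sim$ Uniform$(0,1)$ independently of $p_{-i}$,
\begin{align*}
\P\big(H_{(R)}=H_i,R>0\mid p_{-i}\big)\le \frac{\tol}{m\,\hat\pi_0^{(-i)}}.
\end{align*}
It follows that
\begin{align*}
\bFDR\le \frac{\tol}{m}\sum_{i\in\H_0}\E\Big[\frac{1}{\hat\pi_0^{(-i)}}\Big]=\tol\sum_{i\in\H_0}\E\Big[\frac{1-\hat\lambda^{(-i)}}{1+\#\{j\ne i:p_j>\hat\lambda^{(-i)}\}}\Big].
\end{align*}

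\textbf{Step 4 (main obstacle: the expectation bound).} It remains to show that each expectation is at most $1/m_0$. First lower-bound the count by $V_{-i}(t):=\#\{j\in\H_0\setminus\{i\}:p_j>t\}$. Next condition on the non-null $p$-values, so that $\hat\lambda^{(-i)}$ becomes a stopping time for a filtration generated by $V_{-i}$. Under independent uniform nulls, the process $M_t=(1-t)/(1+V_{-i}(t))$, for $t\ge\tol$, is a martingale or supermartingale: given $V_{-i}(s)=v$, the count $V_{-i}(t)$ is Binomial$(v,(1-t)/(1-s))$, and the identity $\E[1/(1+\mathrm{Bin}(v,\rho))]\le 1/((v+1)\rho)$ gives the step. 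Optional stopping then yields $\E[M_{\hat\lambda}]\le \E[M_\tol]=\E\big[(1-\tol)/(1+\mathrm{Bin}(m_0-1,1-\tol))\big]\le 1/m_0$. Summing over the $m_0$ nulls gives $\bFDR\le\tol$.

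I expect this step to be the delicate one. The difficulties are making the filtration enlargement rigorous (the non-null counts enter $\hat\lambda$), handling boundedness or the case $\hat\lambda=1$ for optional stopping, and checking the direction of time in the (super)martingale. My first attempt would be to mimic directly the proof of Gao et al.'s FDR result, where the same quantity is bounded.
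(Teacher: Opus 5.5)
Your proposal is correct and follows essentially the paper's route: a leave-one-out over a null, freezing $\hat{\pi}_0$ on $\{p_i\le\tol\}$ (the paper writes this as $\hat{\pi}_0=\frac{m-1}{m}\hat{\pi}_0(p_{-m})$), bounding the conditional boundary probability by $\tol/(m\hat{\pi}_0)$ via Lemma 2 of \cite{soloff2024edge} (the paper handles the restriction $p_{(k)}\le\tol$ by rescaling $p_i\mapsto p_i/\tol$), and then the Gao et al.\ supermartingale, optional-stopping and Binomial bounds. One small correction: the set of $p_i$ values that make $H_i$ the boundary is in general a disjoint union of intervals, one per rank gap, rather than a single interval, though its total length is still at most $c_i/m$, which is all you use; your explicit conditioning on the non-null $p$-values to justify the stopping-time and supermartingale step is a detail the paper leaves to the citation.
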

\cite{gao2025adaptive} recommended searching for $\lambda$ over a grid, $\tol, \tol + \delta, \tol + 2\delta, \tol + 3\delta, \dots$ for some small number $\delta>0$, e.g. $\delta = 1\%$ or $2\%$. More specifically, to choose $\hat{\lambda}$, we successively check the value of $\hat{\pi}_0^{\lambda}$ for each $\lambda$ in the grid, starting from $\tol$ and incrementing $\lambda$ upward along the grid until the first time that the associated estimate $\hat{\pi}_0^{\lambda}$ stops decreasing, taking $\hat{\lambda}$ to be this grid point at which the first increase was observed. 
This describes a valid stopping rule with respect to the filtration \eqref{eq:filtration}, so Proposition \ref{thm:AS-bFDR-control} implies that the resulting Storey-adjusted SL procedure, with data dependent $\lambda$, controls its bFDR. 

\paragraph{Storey with fixed threshold.} \cite{soloff2024edge} adjust the SL procedure using the Storey estimator with $\lambda=1/2$ and prove that this adjustment preserves bFDR control under an independence assumption in a Bayesian two-groups model. Our simulations show this procedure works very well in the independent case relative to all other procedures considered here, but worse in settings with positive dependence.
In the latter case, we recommend using the smaller level $\lambda=\tol$, e.g. $\lambda=0.2$. This $\pi_0$ estimator was noted to work well under dependence for controlling FDR with an adaptive BH procedure \citep{blanchard2009adaptive}. We find an analogous phenomenon holds for SL and the bFDR (Section \ref{sec:simulation}).

\paragraph{Lowest slope (LSL) estimator.} The lowest slope estimate of $m_0$
proceeds as follows. First the $p$-values are sorted from smallest to largest $p_{(1)}\leq \dots \leq p_{(m)}$. Starting from $i=1$, at each step, the slope $S_i$ of the line passing through $(m+1,1)$ and $(i,p_{(i)})$ is computed and compared to the slope $S_{i-1}$ passing through $(m+1,1)$ and $(i-1,p_{(i-1)})$, with $p_{(0)}\coloneqq 0$. If $S_{i} > S_{i-1}$, then we consider the next smallest $p$-value $p_{(i+1)}$ and repeat. The iteration stops at the first decrease in slope, i.e. when $S_{i}<S_{i-1}$, and the estimate is then $\hat{m}_0 = \min\{\lceil 1/S_{i} \rceil,m\}$  (see Figure 2 in \cite{benjamini2000adaptive} for a visualization on a small dataset).

Letting $\hat{\pi}_0^{\text{LSL}} \coloneqq \hat{m}_0/m$, the bFDR guarantee of the SL procedure adapted with this estimate of $\pi_0$ can be approximately justified using Proposition \ref{thm:AS-bFDR-control}. The reason is that for small values of $\tol$, the adaptive Storey procedure with grid $\{\tol ,p_{(i^*)},p_{(i^*+1)},\dots\}$ where $i^* = \min\{j:p_j \geq \tol\}$, often coincides exactly with the LSL estimate, as the value of $\lambda$ chosen by the latter is usually larger than $\tol$ in practice. 



\section{Numerical illustration}
\label{sec:simulation}


In this section, we study the type I error and power of the adaptive bFDR procedures on simulated data generated as in the numerical settings of \cite{benjamini1995controlling} and \cite{benjamini2006adaptive}. Specifically, one-sided \textit{p}-values are computed from unit variance Gaussian test statistics whose means are set according to one of the following configurations.

\paragraph{Alternating configuration.} The non-null means are set to
\begin{align*}
\mu_1=5\times 1/4, \; \mu_2=5\times 2/4,\; \mu_3=5\times 3/4,\;\mu_4=5 \times 4/4,    
\end{align*}
repeating cyclically for the first $m_1$ coordinates, where $m_1$ is a multiple of 4. The null means are all set to zero. 

\paragraph{``All at 5'' configuration.} In this case, the non-nulls are strongly separated from the nulls: the non-null means are equal to 5, and the null means are all equal to 0, i.e.
\begin{align*}
    \mu_1=\dots=\mu_{m_1}=5, \; \mu_{m_1+1}=\dots=\mu_m=0.
\end{align*}

\paragraph{Overview.} 
In Section \ref{sec:type1-sim}, we assess the boundary FDR as a function of the tuning parameter $\tol$ for each procedure in the setting with independent $p$-values in both the alternating mean configuration and the all-at-5 configuration, and compare the power of these procedures in the alternating mean setting.
In Section \ref{sec:type1-dependence}, we compare the bFDR of the procedures and also the variability of the null proportion estimators when there is positive equicorrelation between test statistics.

All procedures listed below can be understood as thresholding an empirical Bayes estimate of the local false discovery rate (lfdr, \cite{efron2001empirical}), defined in a Bayesian model where $H_1,\dots,H_m$ are random, and $\lfdr(p_i) = \P(H_i = 0 \mid p_i)$. 
In Section \ref{sec:variability-lfdr}, we assess the variability of lfdr estimates implied by the adaptive procedures. In the current frequentist context, $\lfdr(t)$ can be roughly interpreted as the false discovery proportion among $p$-values near a given point \citep{xiang2025frequentist}. 
We also assess the variability of the true lfdr at the rejection thresholds. 
Unless otherwise specified, the number of tests is set to $m=64$, the null proportion is $\pi_0=0.75$, and the tuning parameter is $\tol=0.2$. The procedures we compare are as follows:

    \begin{enumerate}
        \item \textbf{TSSL($q$):} Two-Stage Support Line using $\tol$ at both stages, which has guaranteed bFDR $\leq \frac{\tol}{1-\tol}$ in the independent case.
        \item \textbf{TSSL($q'$):} Two-Stage Support Line using $q' := \frac{q}{1+q}$ at both stages, which guarantees bFDR $\leq \tol$ in the independent case.
        \item \textbf{Storey(1/2):} Storey-adjusted Support Line procedure using fixed $\lambda = 0.5$.
        \item \textbf{Storey($q$)}: Storey-adjusted Support Line procedure using fixed $\lambda = \tol$.
        \item \textbf{AS(0.1; $q$):} Adaptive-Storey Support Line procedure with threshold $\hat \lambda$ chosen as described in Section \ref{sec:main}, with $\delta = 0.1$.
        \item \textbf{AS(0.01; $q$):} Adaptive-Storey Support Line procedure using $\delta = 0.01$.
        \item \textbf{AS(0.1; 0.5):} Adaptive-Storey Support Line procedure using $\delta = 0.1$ but using a grid that starts from 0.5 instead of $\tol$.
        \item \textbf{LSL:} Support Line procedure run at level $\tol / \hat{\pi}_0$ with the Lowest Slope Estimator for~$\hat{\pi}_0$.
        \item \textbf{SL:} Support Line procedure at level $\tol$ (Benchmark).
        \item \textbf{Oracle:} Support Line at level $\tol / \pi_0$ (Benchmark).
    \end{enumerate}
To estimate the boundary FDR for each procedure, we run $N=10,000$ independent multiple testing experiments and record for each one the null status of the last rejection. The $\bFDR$ estimate for each procedure is the proportion of experiments where its last rejection was a false discovery.


\subsection{Independent \textit{p}-values}
\label{sec:type1-sim}

\begin{figure}[h!]
    \centering
    \begin{subfigure}[b]{0.40\textwidth}
        \centering
        \includegraphics[width=\linewidth]{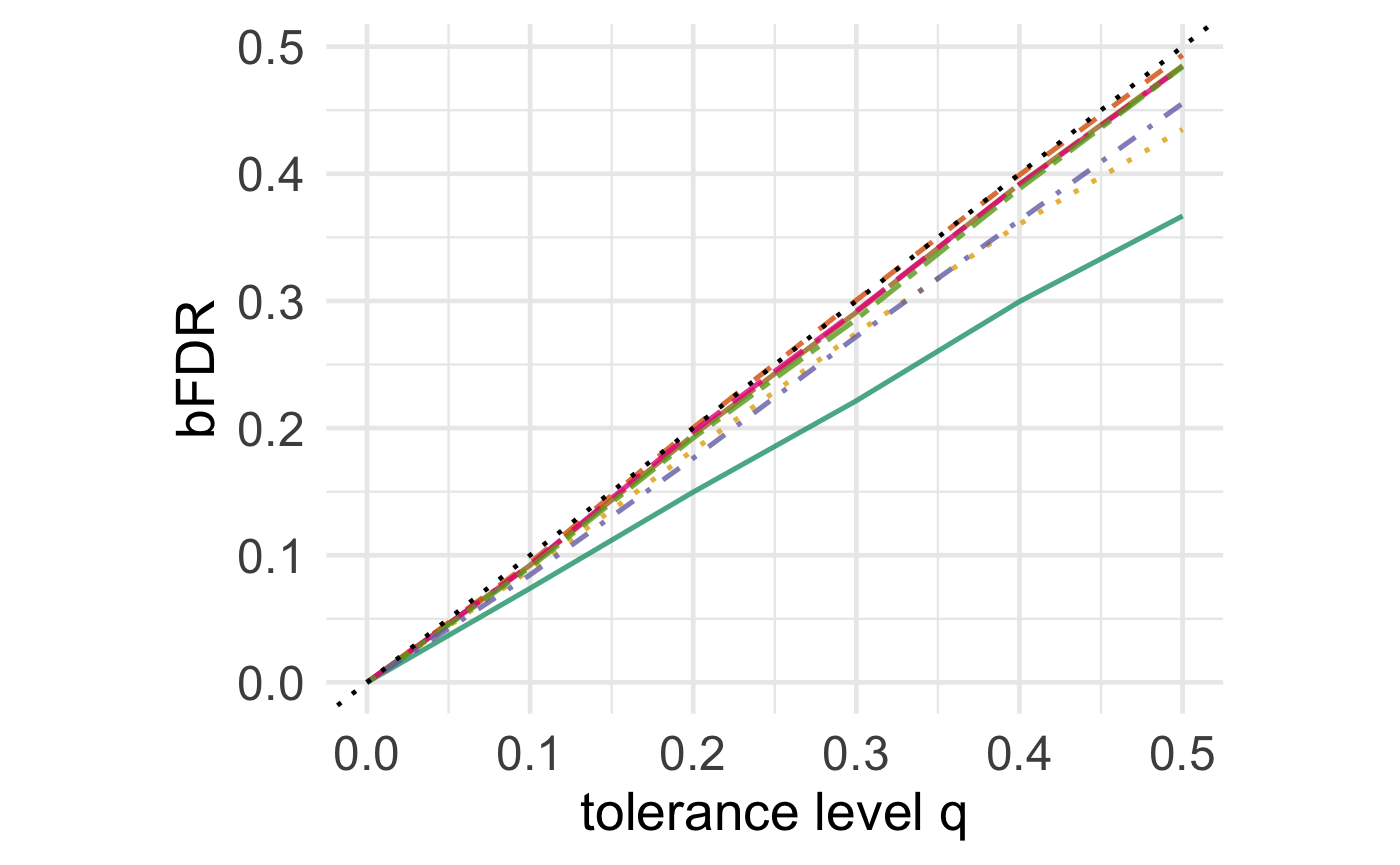}
        \caption{$\pi_0 = 0.75$}
        \label{fig:ind075 config 1}
    \end{subfigure}
    \begin{subfigure}[b]{0.40\textwidth}
        \centering
        \includegraphics[width=\linewidth]{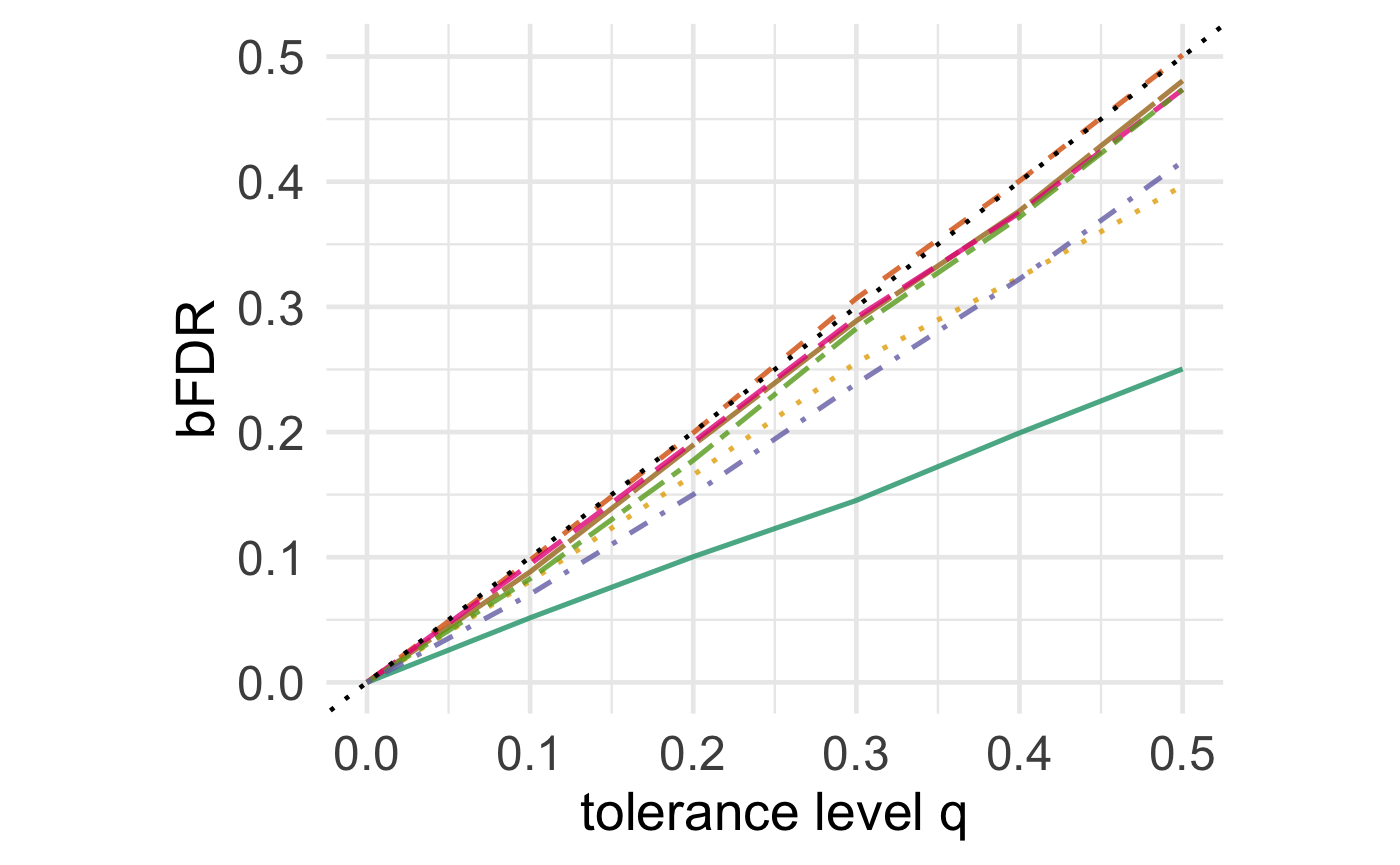}
        \caption{$\pi_0 = 0.5$}
        \label{fig:ind05 config 1}
    \end{subfigure}
    \begin{subfigure}[b]{0.18\textwidth}
        \centering
        \includegraphics[width=\linewidth]{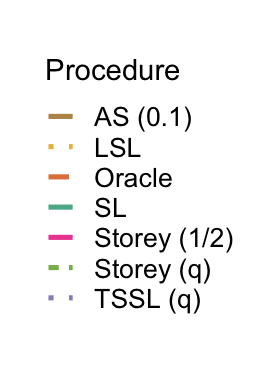}
        \vspace{0.2cm}
        \label{fig:legend}
    \end{subfigure}

    \caption{Boundary FDR versus tuning parameter for the alternating configuration (independent Gaussian test statistics), $N = 10{,}000$ simulations, $m = 64$ $p$-values, with $\pi_0 = 0.75$ (left) and $\pi_0 = 0.5$ (right). }
    \label{fig:ind config 1}
\end{figure}

\begin{figure}[h!]
    \centering
    \begin{subfigure}[b]{0.40\textwidth}
        \centering
        \includegraphics[width=\linewidth]{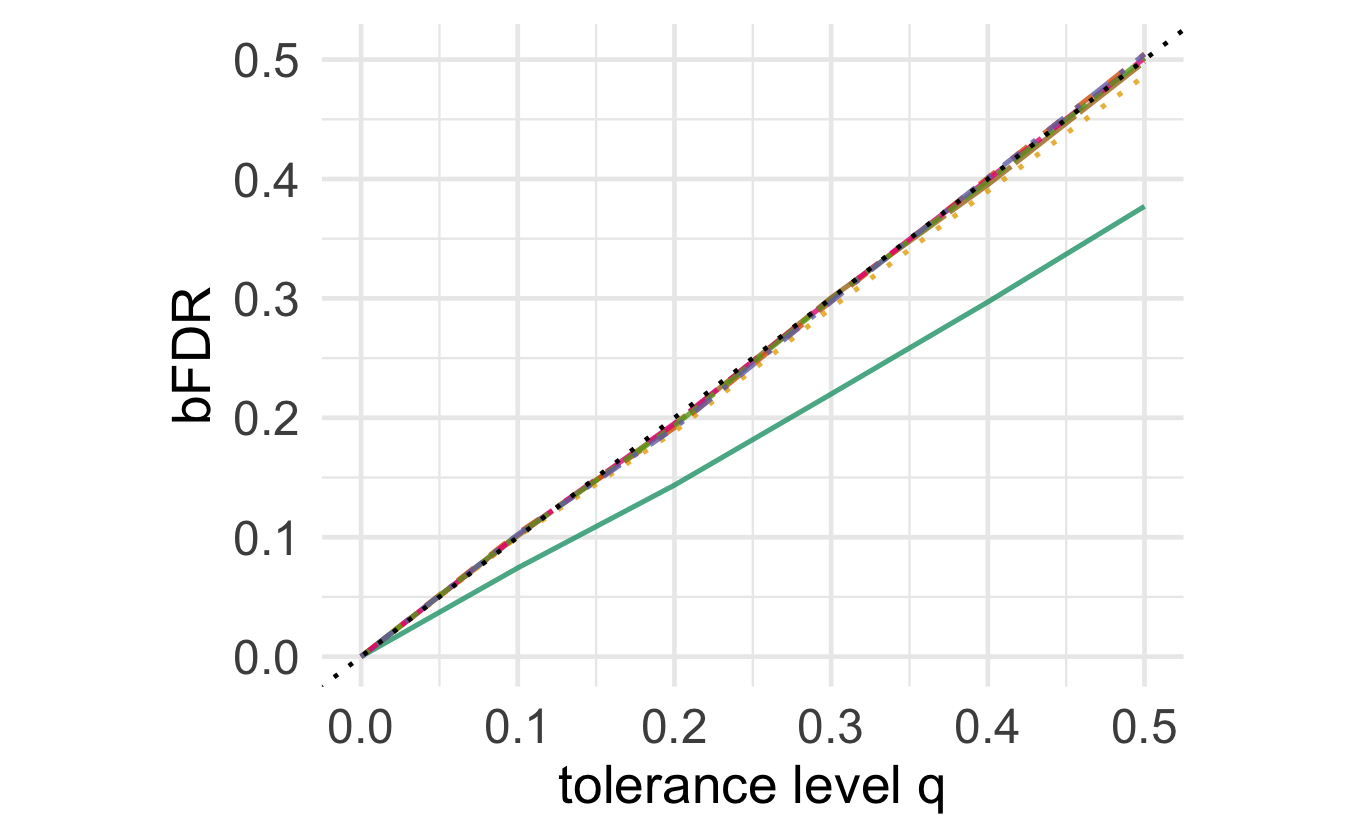}
        \caption{$\pi_0 = 0.75$}
        \label{fig:ind075 config 2}
    \end{subfigure}
    \begin{subfigure}[b]{0.40\textwidth}
        \centering
        \includegraphics[width=\linewidth]{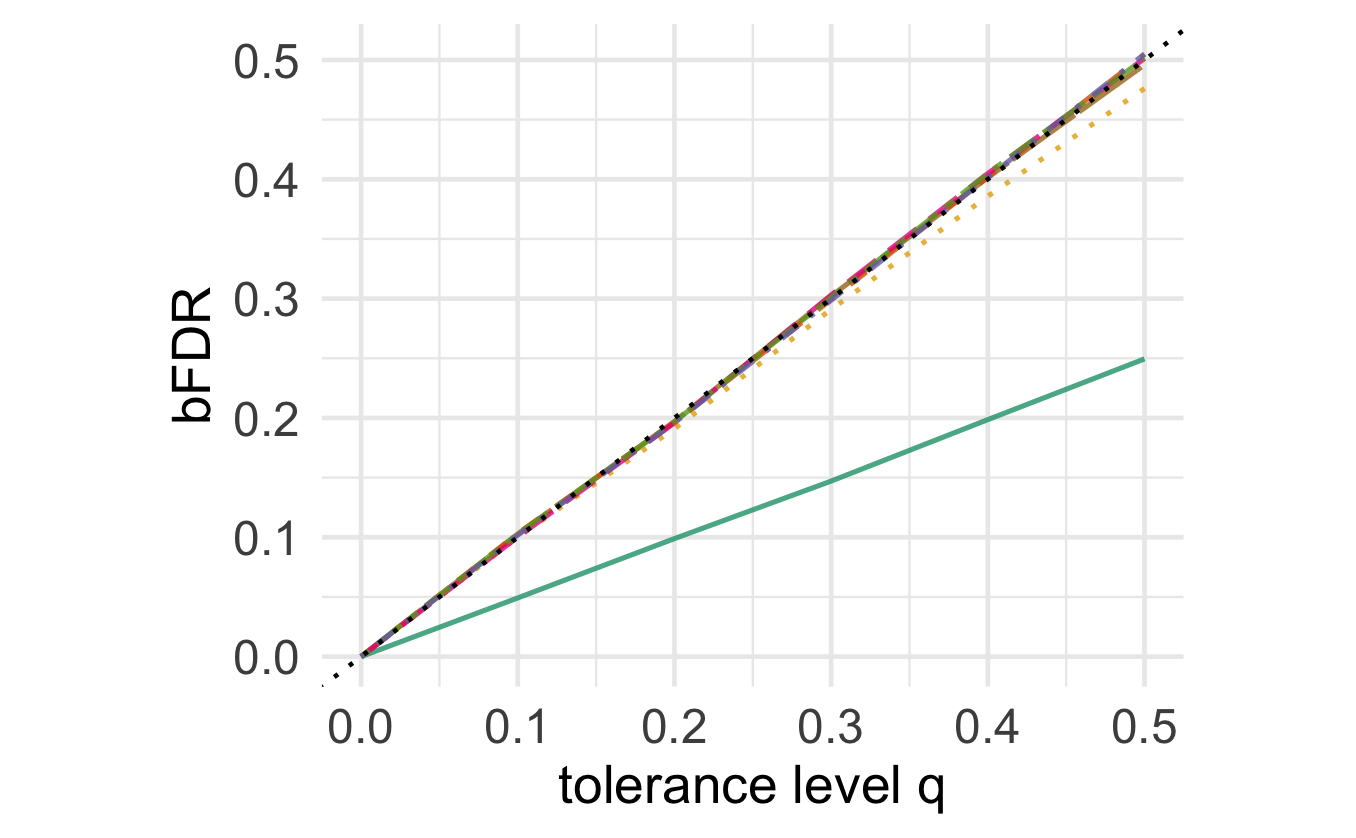}
        \caption{$\pi_0 = 0.5$}
        \label{fig:ind05 config 2}
    \end{subfigure}
    \begin{subfigure}[b]{0.18\textwidth}
        \centering
        \includegraphics[width=\linewidth]{figures/ind-legend.png}
        \vspace{0.2cm}
        \label{fig:legend 2}
    \end{subfigure}

    \caption{Boundary FDR versus tuning parameter for the all-at-5 configuration with $N = 10{,}000$ simulations, $m = 64$ p-values, with $\pi_0 = 0.75$ (left) and $\pi_0 = 0.5$ (right).}
    \label{fig:ind config 2}
\end{figure}
In Figures \ref{fig:ind config 1} and  \ref{fig:ind config 2}, we plot the $\bFDR$ versus the tolerance level $\tol$, for independent \textit{p}-values at two different levels of $\pi_0$, in both the 
alternating and all-at-5 configurations.
We can see that the $\bFDR$ for the adaptive procedures is much closer to the oracle $\bFDR$ and to $\tol$ compared to the standard Support Line procedure. The improvement is more pronounced when the alternative distribution is more extreme (as in the all-at-5 configuration) and when $\pi_0$ is smaller. Figure \ref{fig:bfdr vs pi0} shows that with larger values of $\pi_0$, the adaptive procedures still track the target bFDR $\tol=0.2$ much more closely than the standard Support Line procedure.

\begin{figure}[h!]
    \centering
    \begin{subfigure}[b]{0.40\textwidth}
        \centering
        \includegraphics[width=\linewidth]{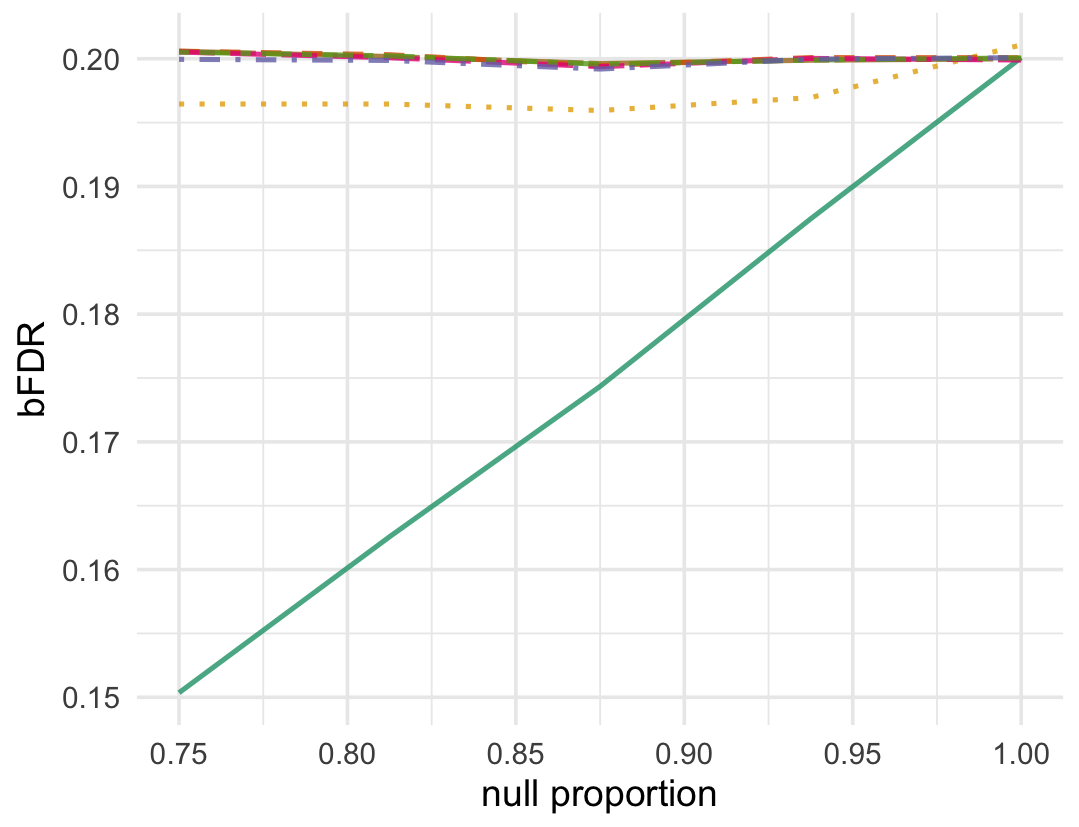}
        \caption{All-at-5 configuration}
        \label{fig:bfdr vs pi0 all 5}
    \end{subfigure}
    \begin{subfigure}[b]{0.40\textwidth}
        \centering
        \includegraphics[width=\linewidth]{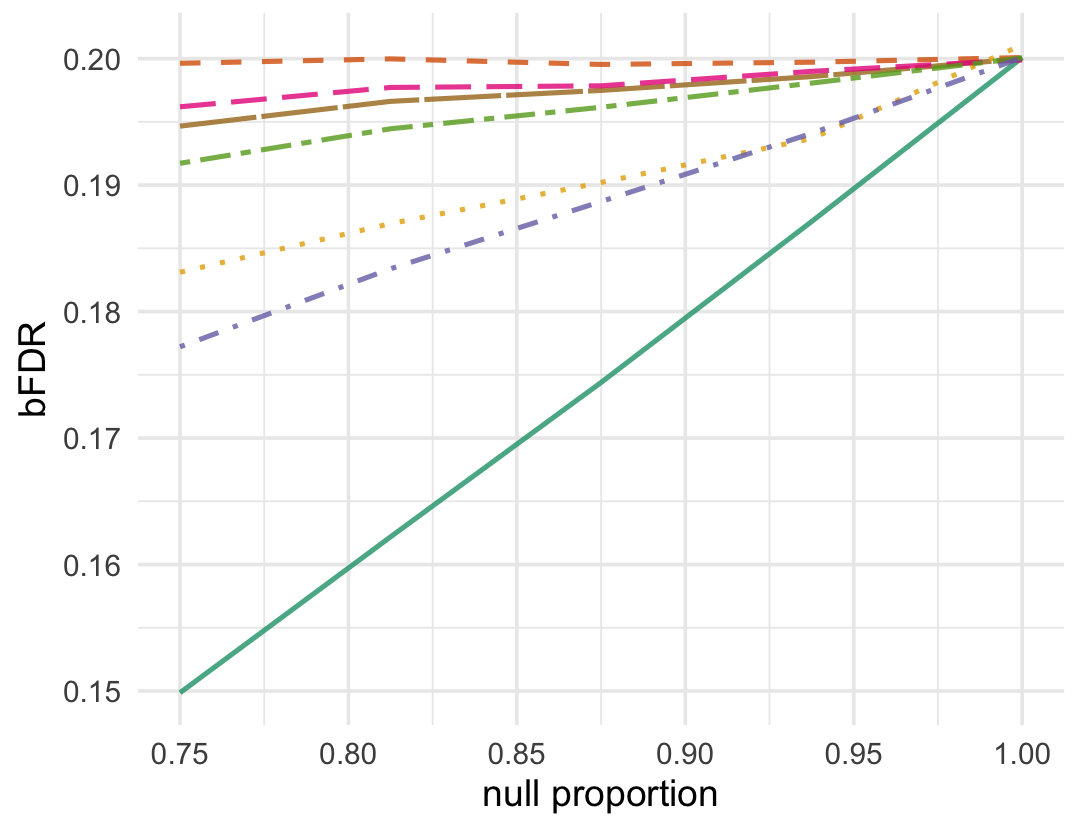}
        \caption{Alternating configuration}
        \label{fig:bfdr vs pi0 alt}
    \end{subfigure}
    \begin{subfigure}[b]{0.18\textwidth}
        \centering
        \includegraphics[width=\linewidth]{figures/ind-legend.png}
        \vspace{0.2cm}
        \label{fig:legend 3}
    \end{subfigure}

    \caption{Boundary FDR versus $\pi_0$ for the all-at-5 configuration (left) and alternating configuration (right) with $N = 10^6$ 
    simulations, $m = 64$ p-values, and $\tol = 0.2$. } 
    \label{fig:bfdr vs pi0}
\end{figure}

Figure \ref{fig:power_heatmaps}, which shows the power of each procedure relative to the oracle, also demonstrates that the biggest gains in power are achieved when $\pi_0$ is small. We also see that TSSL$(q')$ does not always outperform the Support Line procedure, particularly when $\tol$ is small and $\pi_0$ is close to 1. However, the rest of the adaptive procedures still perform as well or better in nearly all cases, even for large values of $\pi_0$.

\subsection{Correlated \textit{p}-values}
\label{sec:type1-dependence}

\begin{figure}[h!]
    \centering
    \begin{subfigure}[t]{0.42\textwidth}
        \centering
        \includegraphics[width=\linewidth]{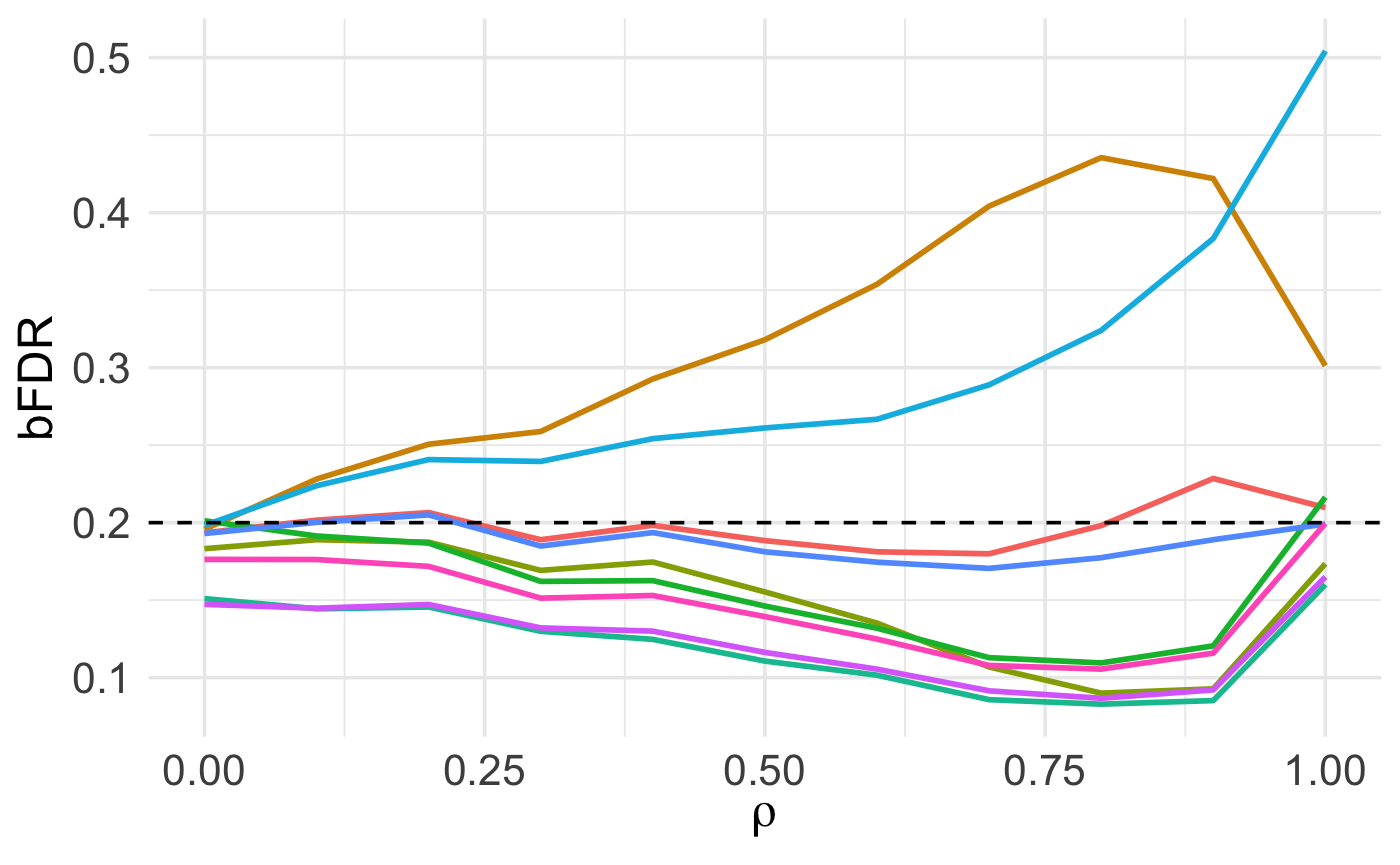}
        \caption{bFDR vs correlation $\rho$. }
        \label{fig:corr bfdr}
    \end{subfigure}
    \hfill
    \begin{subfigure}[t]{0.42\textwidth}
        \centering
        \includegraphics[width=\linewidth]{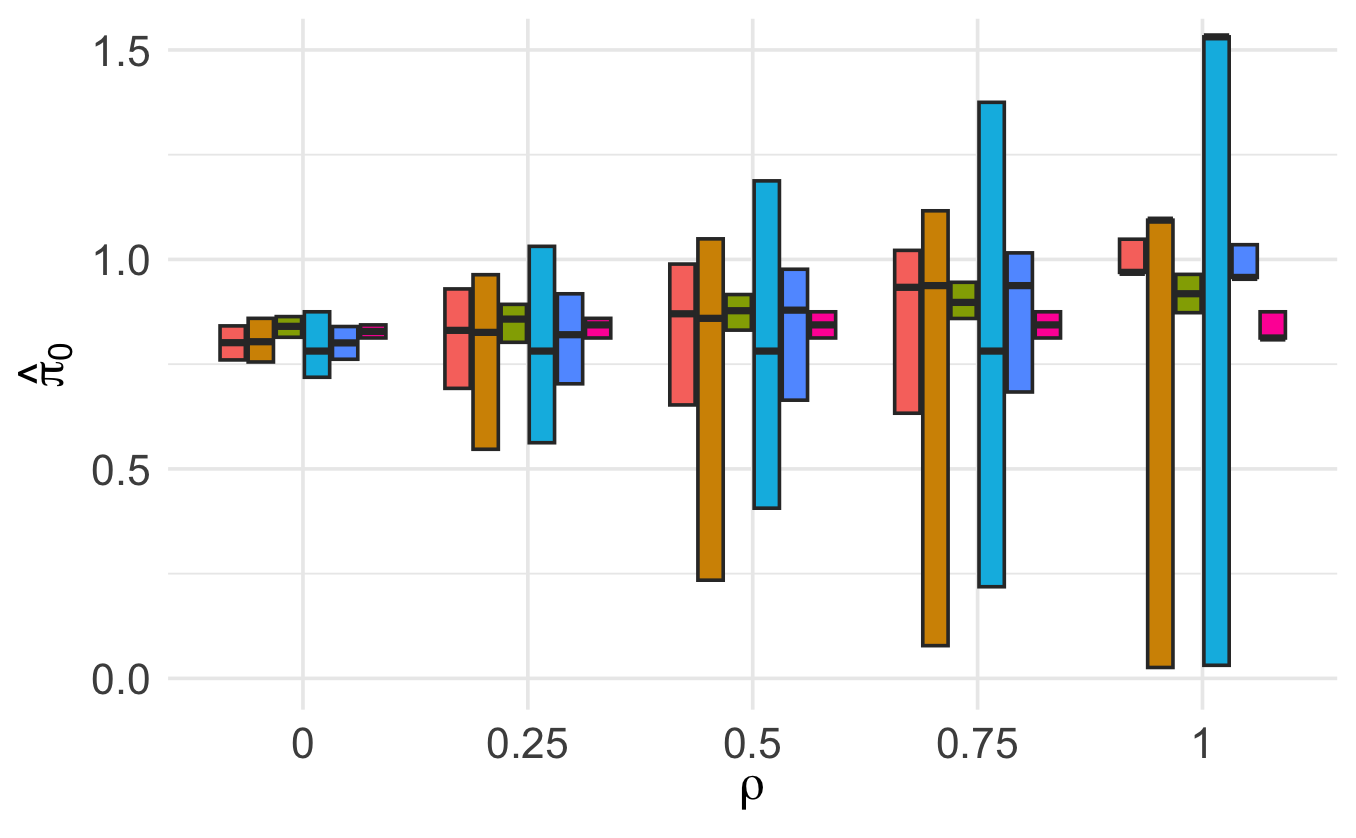}
        \caption{Interquartile ranges of $\hat{\pi}_0$ estimates for $\rho \in \{0, 0.25, 0.5, 0.75, 1\}$. }
        \label{fig:corr pi0hat}
    \end{subfigure}
    \hfill
    \raisebox{6mm}{
        \begin{subfigure}[t]{0.12\textwidth}
            \centering
            \includegraphics[width=\linewidth]{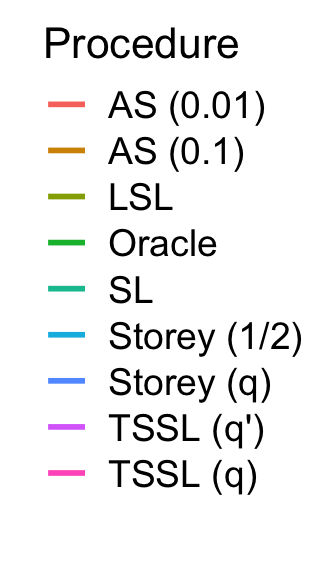}
            \label{fig:corr legend}
        \end{subfigure}
    }
    \caption{Boundary FDR (lines) and $\hat{\pi}_0$ (boxplots) versus correlation parameter for the alternating configuration (equicorrelated Gaussian test statistics), $N = 10{,}000$ simulations, $m = 64$ \textit{p}-values, $\tol = 0.2$, and $\pi_0 = 0.75$.}
    \label{fig:equicorr}
\end{figure}

Figure \ref{fig:corr bfdr} shows $\bFDR$ versus the dependence parameter $\rho$ for the same numerical setting as in Section \ref{sec:type1-sim} but with positively equicorrelated Gaussian noise, where the equicorrelation parameter $\rho$ varies between 0 and 1. Figure \ref{fig:corr pi0hat} suggests that the adaptive procedures' performance under strong positive dependence can be somewhat explained by the variability of their $\hat \pi_0$ estimates, since the procedures with the greatest $\bFDR$ violation also seem to have far higher variance in estimating $\pi_0$. Under strong positive dependence, the standard Storey-type adjustment (with $\lambda=1/2$) fails to maintain bFDR control. Futhermore, while the choice of $\delta$ for the Adaptive Storey procedure did not seem to make a considerable difference for $\bFDR$ control in the independent case, we see a bigger
difference in the correlated case.
On the other hand, the Storey adjustment with fixed $\lambda=\tol=0.2$ does not follow this trend, and appears to perform well across all values of $\rho$. This likely also explains the success of Adaptive Storey with $\delta=0.01$, which stops early and thus performs most similarly to Storey with fixed $\lambda=\tol$.
\begin{figure}[htbp]
    \centering

    \begin{subfigure}{0.95\textwidth}
        \centering
        \includegraphics[width=\textwidth]{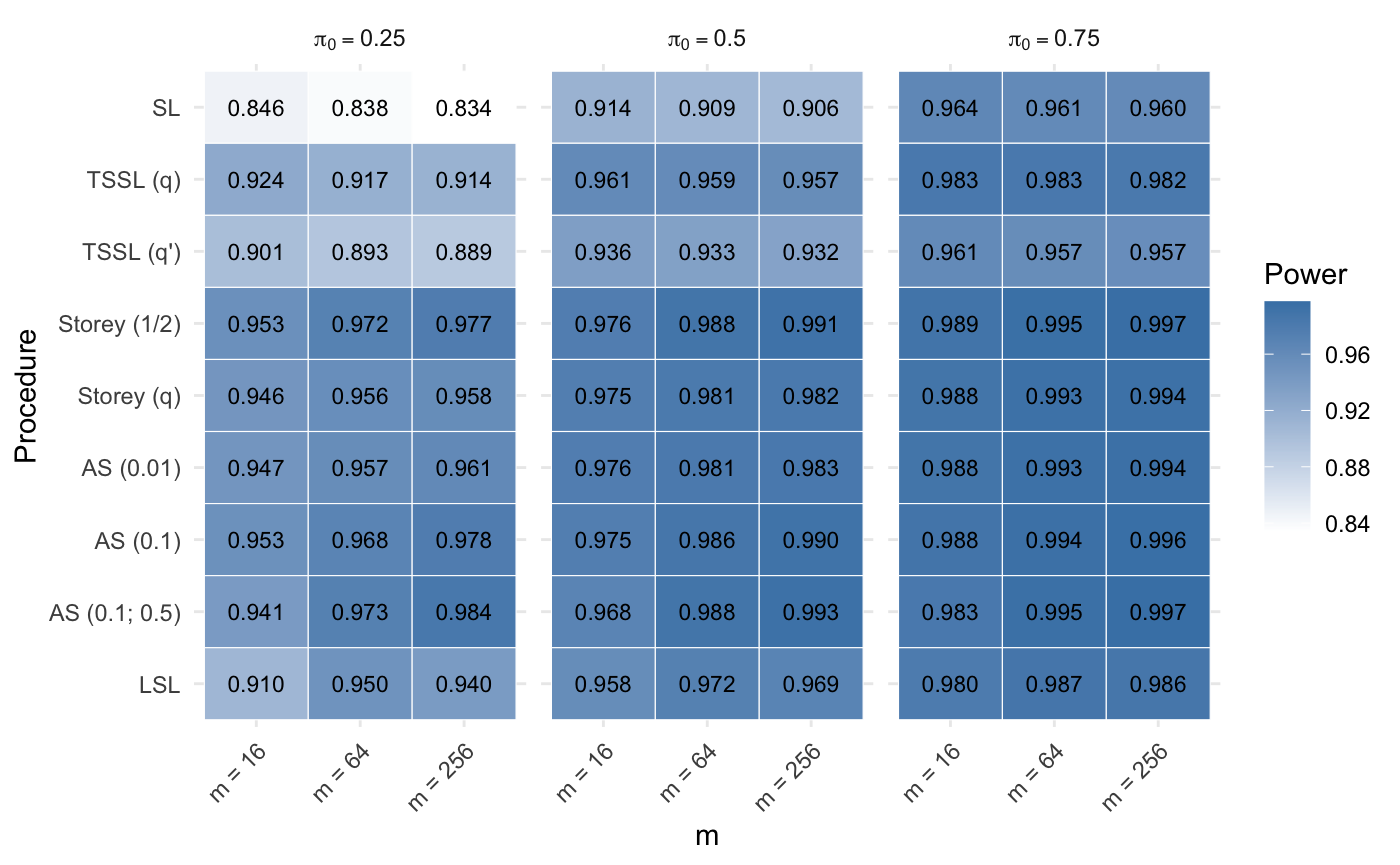}
        \caption{}
    \end{subfigure}

    \vspace{0.2cm}

    \begin{subfigure}{0.95\textwidth}
        \centering
        \includegraphics[width=\textwidth]{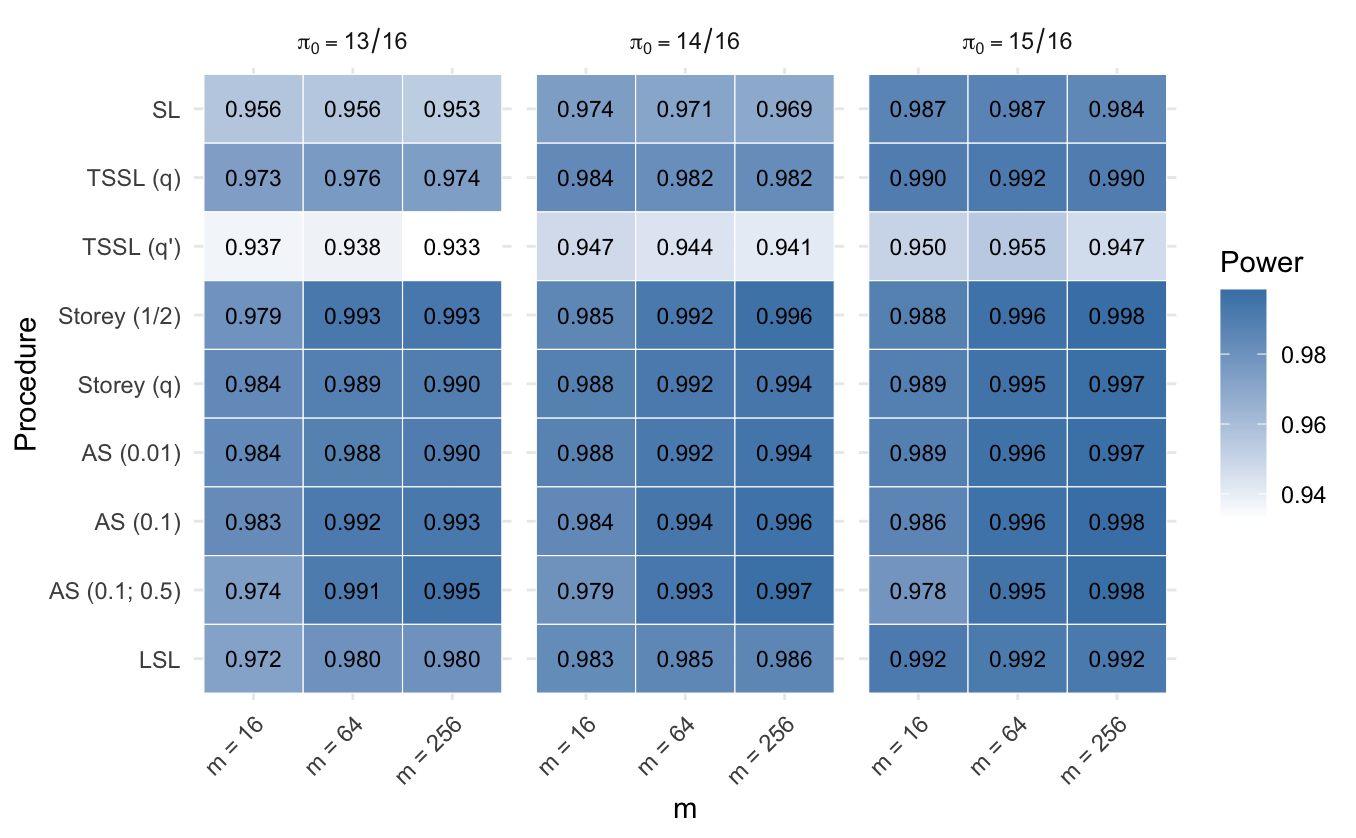}
        \caption{}
    \end{subfigure}

    \caption{Power heatmaps showing power of multiple testing procedures relative to the oracle procedure, using $\pi_0 = 0.25, 0.5, 0.75, 13/16, 14/15, 15/16$, $m = 16, 64, 256$, and $\tol = 0.2$. $N = 10,000$ simulations. \ }
    \label{fig:power_heatmaps}
\end{figure}

\subsection{Variability of lfdr}
\label{sec:variability-lfdr}
\begin{figure}[htbp]
\centering

\begin{subfigure}[b]{0.3\textwidth}
    \centering
    \includegraphics[width=\linewidth]{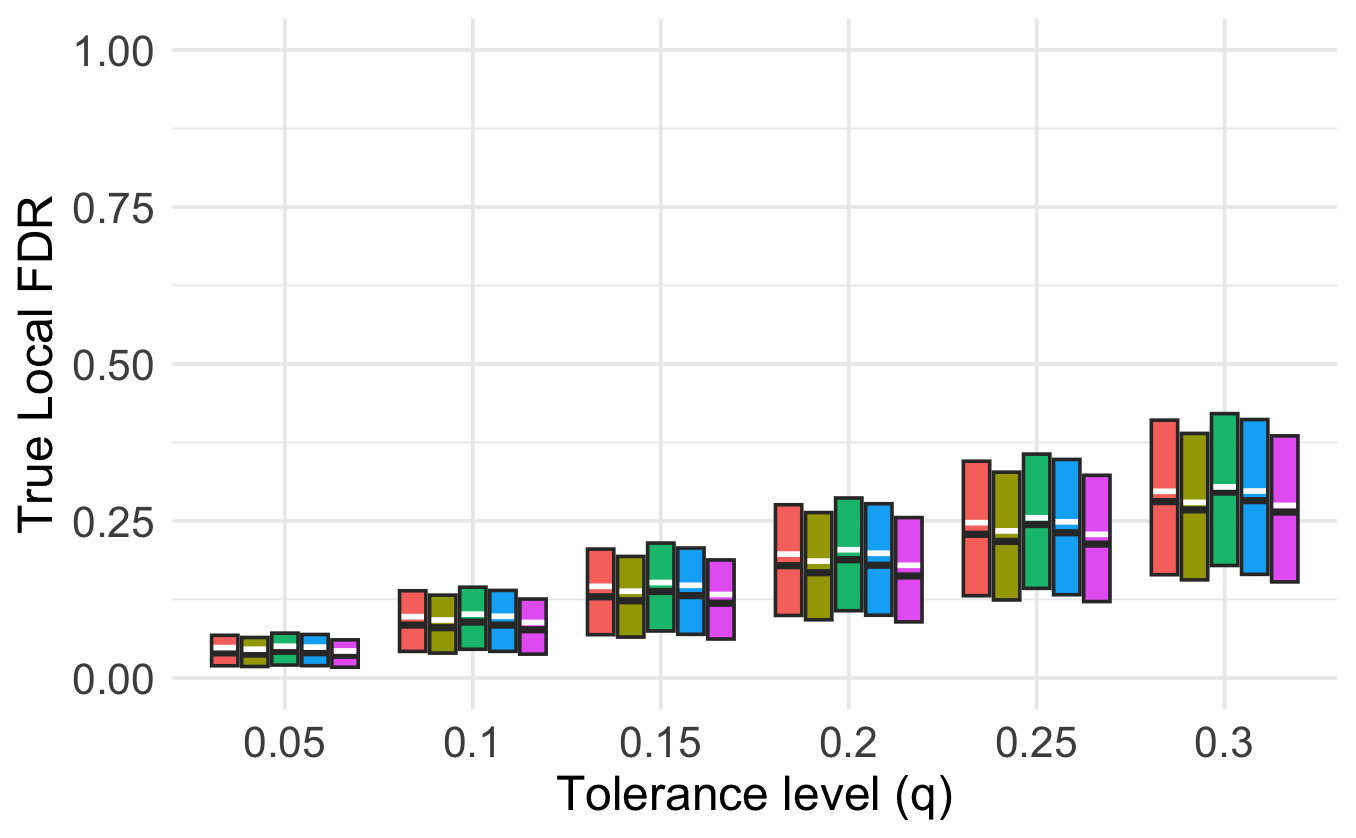}
    \caption{$\lfdr(p_{(R)})$ ($m=64$)}
\end{subfigure}%
\hspace{0.01\textwidth}
\begin{subfigure}[b]{0.3\textwidth}
    \centering
    \includegraphics[width=\linewidth]{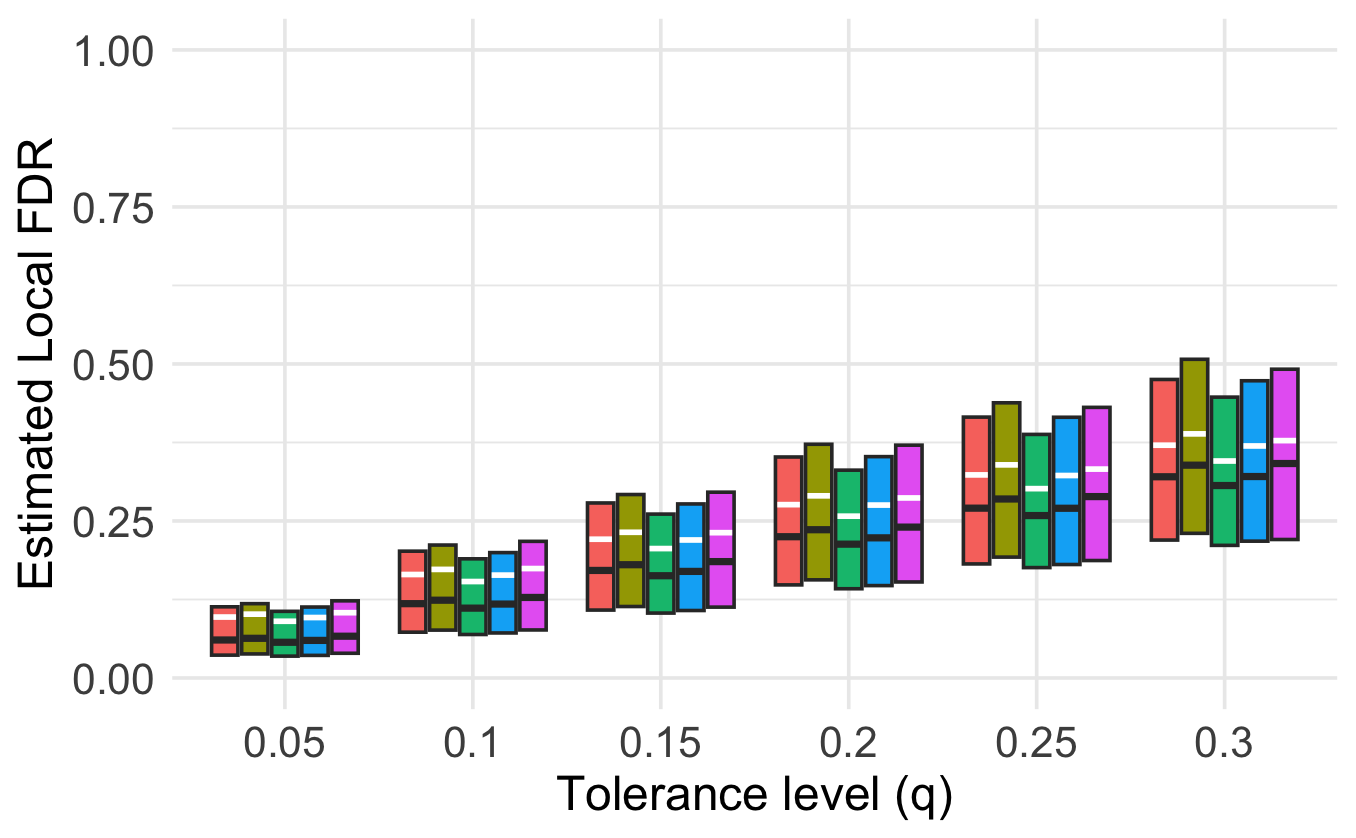}
    \caption{$\widehat{\text{lfdr}}(t^*)$ ($m=64$)}
\end{subfigure}%
\hspace{0.01\textwidth}
\begin{subfigure}[b]{0.3\textwidth}
    \centering
    \includegraphics[width=\linewidth]{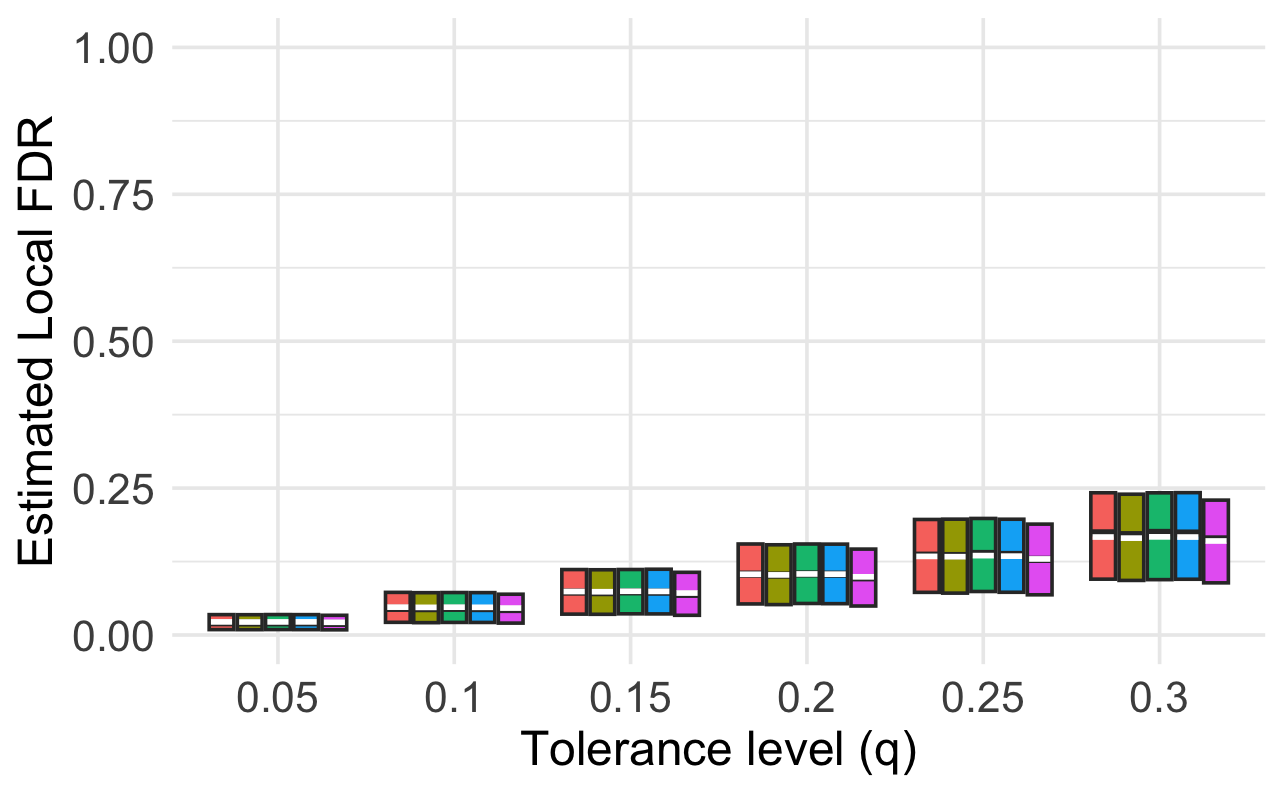}
    \caption{$\widehat{\text{lfdr}}(p_{(R)})$ ($m=64$)}
\end{subfigure}%
\hfill
\begin{minipage}[c]{0.06\textwidth}
    \centering
    \includegraphics[width=\linewidth]{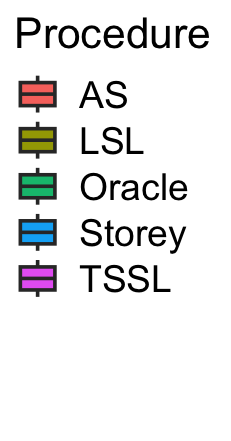}
\end{minipage}

\vspace{1em}

\begin{subfigure}[b]{0.3\textwidth}
    \centering
    \includegraphics[width=\linewidth]{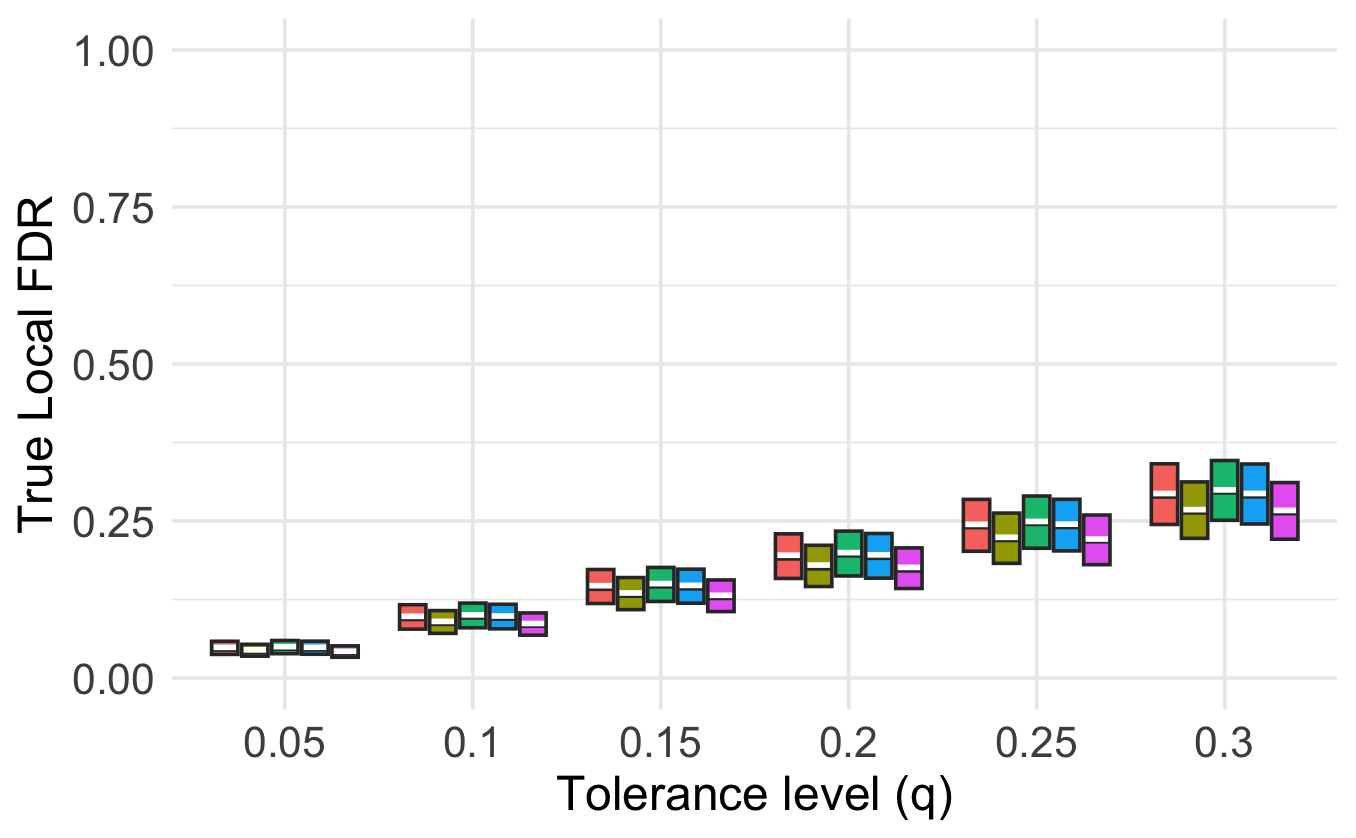}
    \caption{$\lfdr(p_{(R)})$ ($m=1024$)}
\end{subfigure}%
\hspace{0.01\textwidth}
\begin{subfigure}[b]{0.3\textwidth}
    \centering
    \includegraphics[width=\linewidth]{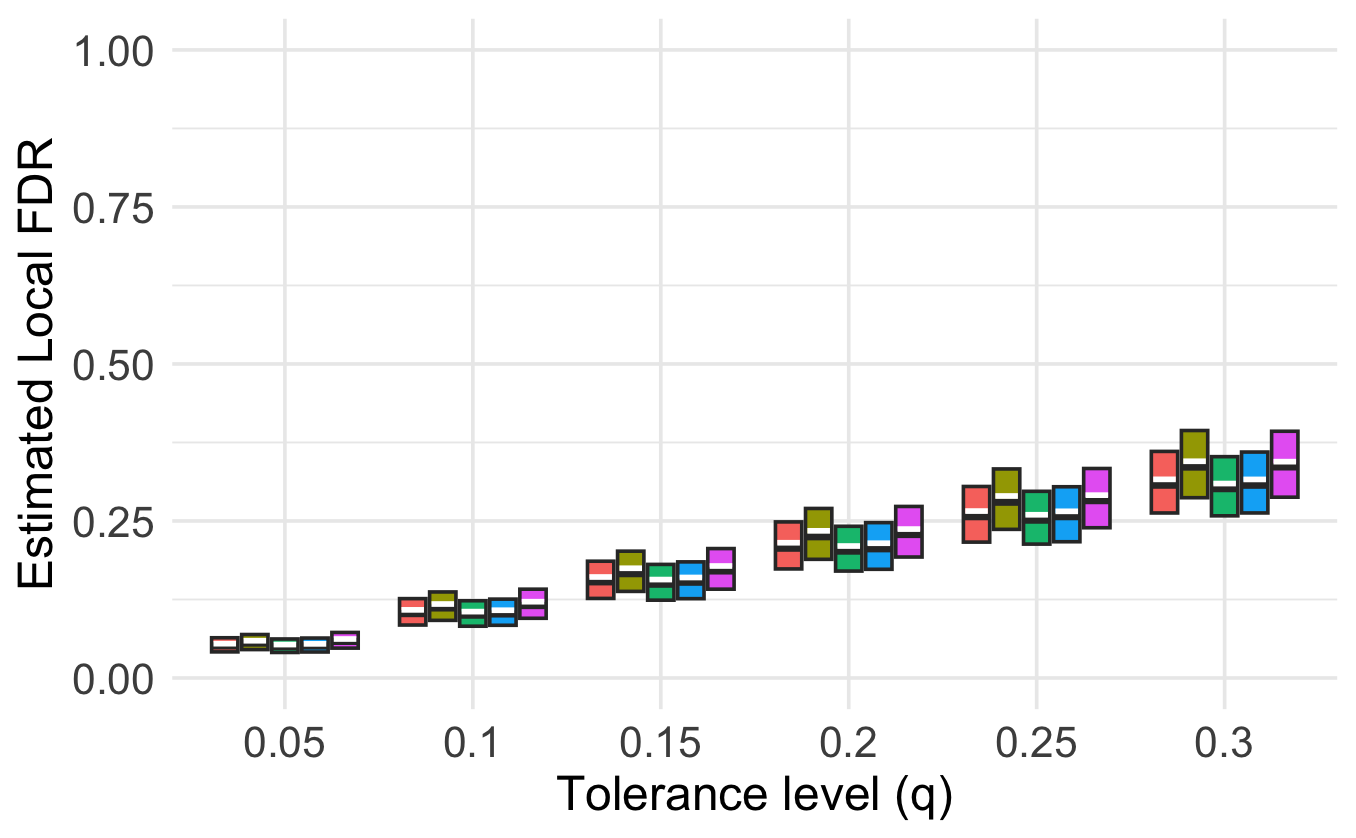}
    \caption{$\widehat{\text{lfdr}}(t^*)$ ($m=1024$)}
\end{subfigure}%
\hspace{0.01\textwidth}
\begin{subfigure}[b]{0.3\textwidth}
    \centering
    \includegraphics[width=\linewidth]{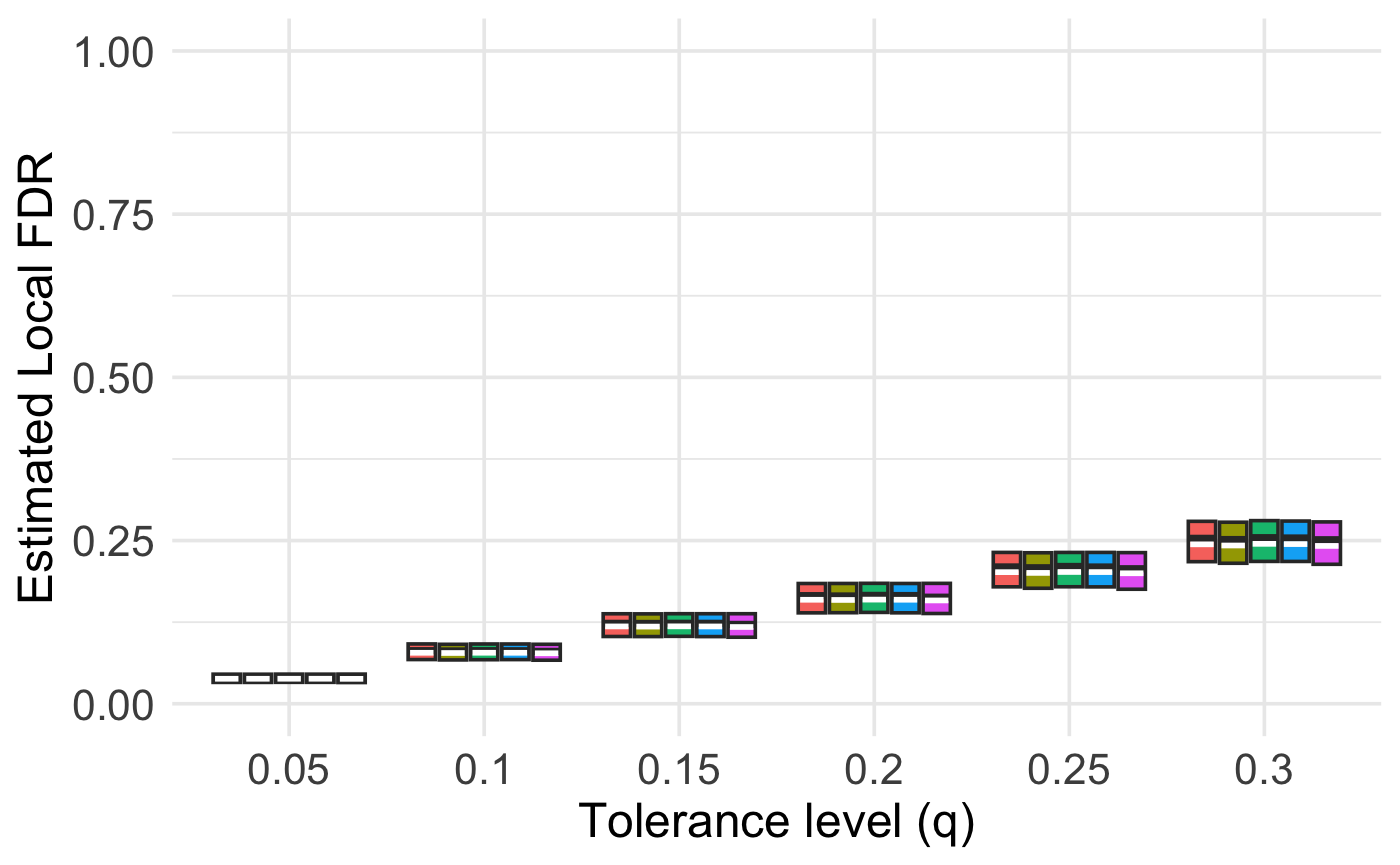}
    \caption{$\widehat{\text{lfdr}}(p_{(R)})$ ($m=1024$)}
\end{subfigure}%

\caption{Shown above are the interquartile ranges of the true lfdr at the estimated cutoff, the estimated lfdr evaluated at the oracle cutoff, and the estimated lfdr evaluated at the estimated cutoff (left to right) for independent test statistics generated 
under an alternating configuration with $\pi_0 = 0.75$ and $N = 10{,}000$.
Black lines denote medians and white lines indicate means.
TSSL is run at level $q$, Storey uses fixed $\lambda = 0.5$, 
and Adaptive Storey uses $\delta = 0.1$.
}
\label{fig:true-est-lfdr-comparison}

\end{figure}

\cite{soloff2024edge} show the Support Line procedure is equivalent to thresholding a monotone estimate of the lfdr, which is useful for assessing significance throughout the rejection set. 
One concern 
is that these lfdr estimates could be quite variable, particularly for smaller values of $m$, since they rely on nonparametric density estimates. If there are too few \textit{p}-values near the cutoff, the estimate $\hat f$ could have such high variance so as to not be useful in practice. We study this variation numerically by examining box plots 
of the estimated lfdr at the $p$-value cutoffs for various choices of $q$, along with the frequentist lfdr function, which is defined for the alternating mean configuration as:
$$\lfdr(t) = \frac{\pi_0}{\pi_0 + (1 - \pi_0)\frac{\frac{1}{4}\sum_{j = 1}^4 \phi(\Phi^{-1}(1 - t) - \frac{5j}{4})}{\phi(\Phi^{-1} (1 - t)) }},$$
evaluated at $t$ equal to the cutoff \textit{p}-value for each procedure. The estimated lfdr is given by $\widehat{\lfdr}(t) = \hat \pi_0 / \hat f(t)$, where $\hat f$ is the non-parametric MLE of a non-increasing density used in the R package \texttt{fdrtool} \citep{strimmer2008fdrtool,grenander1956theory}. Figure \ref{fig:true-est-lfdr-comparison} displays the results for the experiment when $m=64$ and $m=1024$. 

While the estimated lfdr is $\leq \tol$ at the estimated cutoff by construction, the true lfdr has more variation, often exceeding $\tol$ while on average being no larger than $\tol$. Panels (a) and (d) in Figure \ref{fig:true-est-lfdr-comparison} illustrate that the variability in the true lfdr (evaluated at the rejection boundary) increases with the tuning parameter $\tol$ and decreases for larger $m$. This reflects uncertainty around the estimated cutoff, 
particularly when the number of test statistics is small and the Grenander density estimate is more variable. Panels (b) and (e) show that lfdr estimates evaluated at the oracle threshold $t^*$ (for which $\lfdr(t^*)=\tol$) are conservative in this setting, but less so for larger $m$.

\section{Applications}
\label{sec:applications}

Two examples from recent behavioral psychology meta-analyses are analyzed below using the newly proposed procedures. Each example focuses on assessing the effectiveness of psychological interventions: the first involves 122 studies on growth mindset interventions, and the second focuses on 261 `nudge' experiments from behavioral psychology. Figure \ref{fig:histograms} displays histograms of both datasets and Table \ref{tab:combined_num_rejs} lists the number of rejections for each procedure studied here for both datasets. 

\begin{figure}[h!]
    \centering
    \begin{subfigure}[b]{0.45\textwidth}
        \centering
    \includegraphics[width=\linewidth]{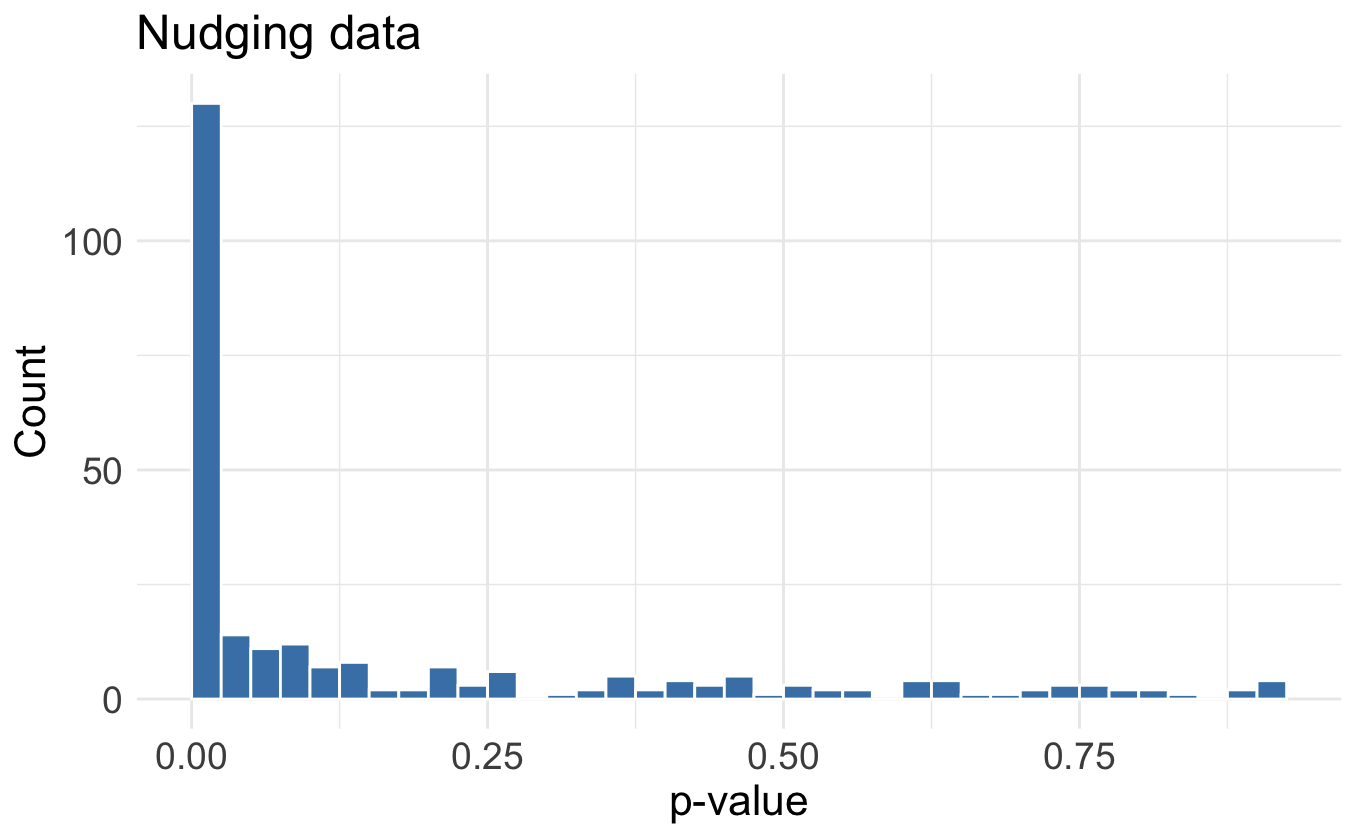}
        \label{fig:merten hist}
    \end{subfigure}
    \hfill
    \begin{subfigure}[b]{0.45\textwidth}
        \centering
        \includegraphics[width=\linewidth]{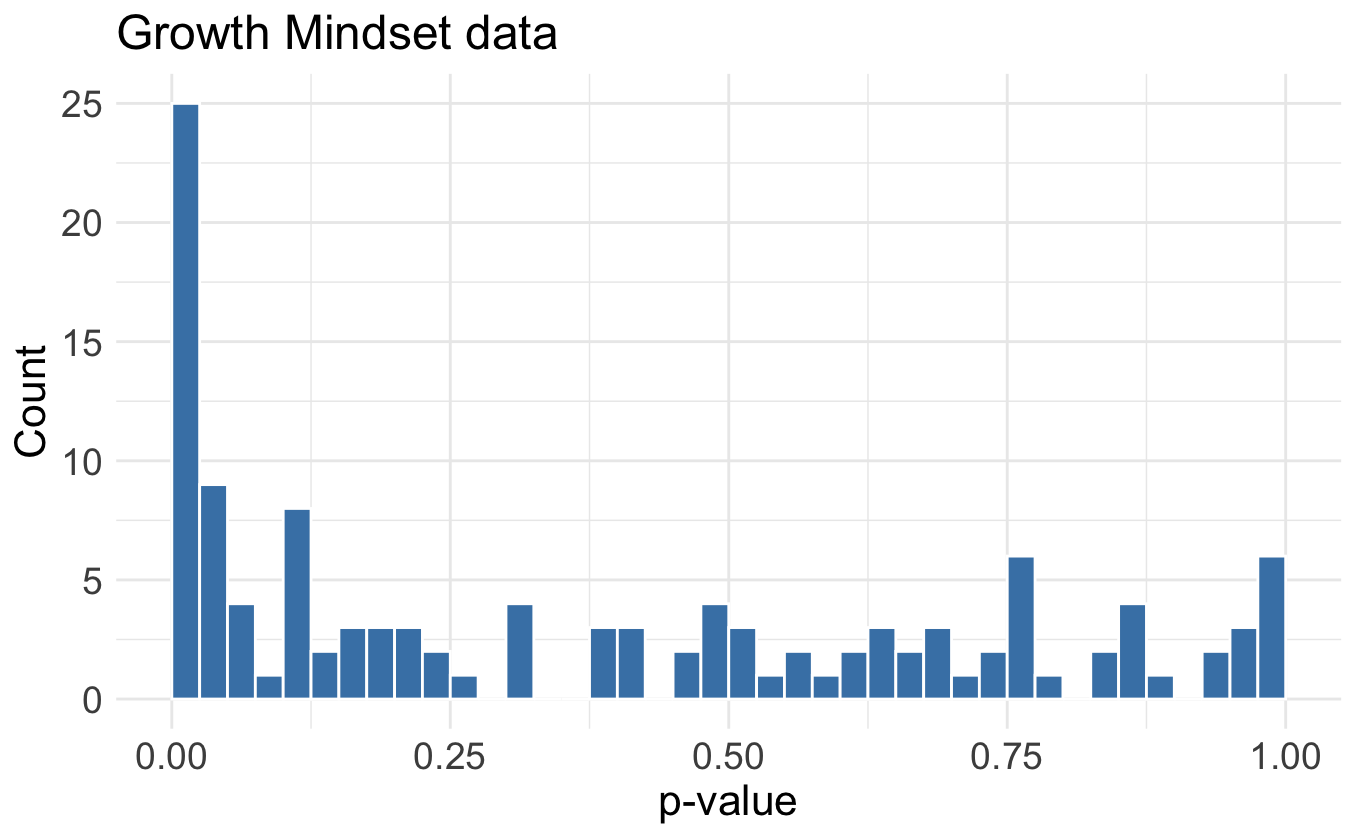}
        \label{fig:macnamara hist}
    \end{subfigure}
    \caption{Histograms of \textit{p}-values for nudging \citep{thaler2009nudge,mertens2022effectiveness} and growth mindset \citep{tipton2023meta,macnamara2023growth} datasets. }
    \label{fig:histograms}
\end{figure}

\subsection{Growth mindset interventions}

There has been debate among psychologists in recent years about the effectiveness of growth mindset interventions, which aim to instill the belief that intelligence is malleable and can be improved with effort. 
\cite{macnamara2023growth} conducted a meta-analysis on 122 studies and concluded that ``apparent effects of growth mindset interventions on
academic achievement are likely attributable to inadequate study design, reporting flaws, and
bias'', finding an insignificant overall effect among studies most closely following best practices $(p=0.666)$. The meta-analysis includes (among many others) three hypothesis tests from \cite{good2003improving}, who tested growth mindset interventions for counteracting stereotype-effects on test performance among female, minority, and low-income groups of middle school students in Texas. 

\cite{good2003improving} randomly assigned 138 seventh-grade students to one of three experimental conditions (teaching that intelligence is expandable, that seventh-grade difficulties are temporary, or both messages) or to a control condition receiving anti-drug messaging,
resulting in experimental vs. control comparisons for female math scores ($p\approx 0.005$), male math scores ($p\approx 0.041$), and reading scores pooled across gender $(p \approx 0.036$). Because all three comparisons use the same control group, the test statistics are positively dependent. This dependence structure is not unique to \cite{good2003improving}; the meta-dataset includes 61 author groups contributing a total of 122 $p$-values with some studies contributing as many as ten $p$-values \citep{outes2020power}. This suggests that comparisons with a common control group are present throughout the dataset, motivating the use of a procedure that is robust to dependence.

Both the BH correction (with $\tol=0.1$) and SL correction (with $\tol=0.2$) discover the female effect, which is consistent with \cite{good2003improving}'s hypothesis that growth mindset interventions improve test scores by counteracting gender or race based stereotypes about mathematical ability, which specifically target females.
BH (with $\tol=0.1$) makes 20 discoveries, while Storey-BH (with $\tol=0.1,\lambda=1/2$) makes 27. By contrast, both SL$(\tol=0.2)$ and TSSL$(\tol'=0.167)$ make only 18 discoveries, hedging against possible bFDR violations due to dependence.

Beyond the dependence concern, Figure \ref{fig:macnamara fdr} shows substantial variation throughout the Storey-BH discovery set; while the overall FDR estimate is under $10\%$, the estimate in the first half $(4\%)$ is less than one third the estimate in the latter half $(14\%)$. Support line methods allow practitioners to assess the reliability of individual interventions they might choose to implement, rather than relying on an aggregate number that masks  this heterogeneity. 
This aligns with the framing by \cite{tipton2023meta}, who challenged the approach of \cite{macnamara2023growth} and advocated for meta-analyses of psychological research to focus instead on revealing where effects are stronger or weaker, rather than seek a single aggregate conclusion as in classical meta-analyses. 

\begin{figure}
    \centering
    \includegraphics[width=0.5\linewidth]{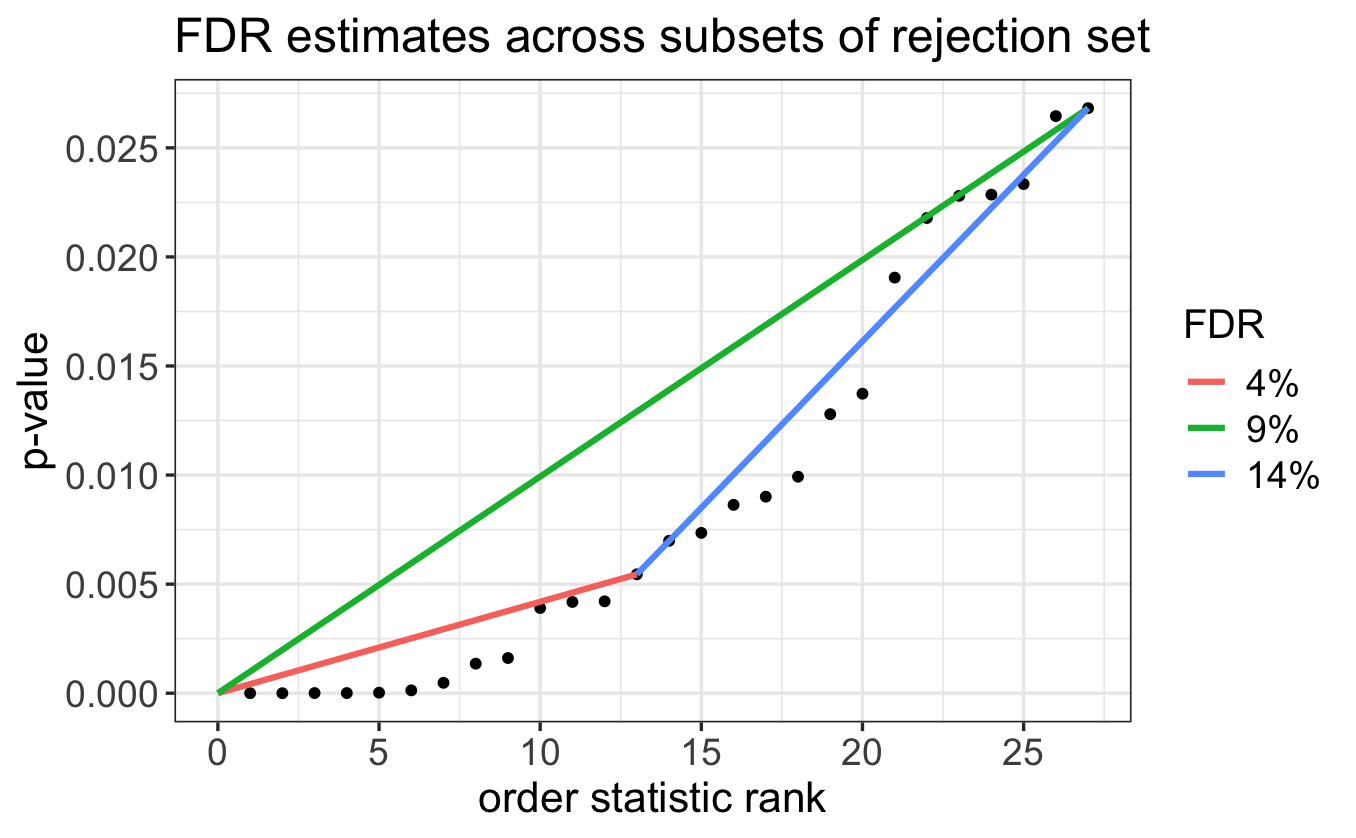}
    \caption{
The figure plots $p$-values against their rank within the Storey-BH rejection set with $\tol=0.1$. The slope of a line connecting the endpoints of
any contiguous subset is proportional to the estimated FDR
within that subset. 
The numbers listed to the right of the graph give $\widehat{\FDR} = \frac{\hat{m}_0 (p_{(t)}-p_{(s)})}{t-s}$, 
where $\hat{m}_0\approx 92$ 
is the Storey estimate of $m_0$ with $\lambda=1/2$. For example, the estimated FDR within the second half of the rejection set (blue) is $92 \times (0.0268-0.005) / (27-13) \approx 14\%$.
    } 
    \label{fig:macnamara fdr}
\end{figure}

\subsection{Nudge interventions}


Another recent meta-analysis mentioned in the critique by \cite{tipton2023meta} is the one by \cite{mertens2022effectiveness}, who compiled data from $447$ nudge experiments, with the goal of assessing the overall effectiveness of nudging.
Broadly, nudge interventions attempt to influence peoples' behavior in a predictable way without restricting their options \citep{thaler2009nudge}. Due to concerns of publication bias \citep{maier2022no}, we restrict attention to the subset of $261$ nudge experiments whose $p$-value was below the 5\% two-sided significance level, following a selection-adjustment procedure  \citep{hung2020statistical}. For more details on this, see Section 6.3 of \cite{xiang2025frequentist}.

Table \ref{tab:combined_num_rejs} shows the number of  rejections for each procedure for the nudging and growth mindset datasets, respectively. We see considerable gains in power for the adaptive procedures compared to the standard Support Line procedure, with the Storey procedures resulting in the highest number of rejections. 
We also see little difference between the various Storey and Adaptive Storey procedures. We note that for the growth-mindset data, TSSL($q'$) and SL have the same number of rejections, which aligns with the observation from Figure \ref{fig:power_heatmaps} that TSSL($q'$) does not demonstrate gains in power over SL when $ \pi_0$ is close to 1.

The nudging data has a higher proportion of near-zero \textit{p}-values than the growth mindset data, which is consistent with the results in Tables \ref{tab:combined_pi0} and \ref{tab:combined_num_rejs}; the nudging data has smaller $\hat \pi_0$ estimates and a higher percentage of rejections. 

\begin{table}[h!]
\centering
\caption{$\hat{\pi}_0$ estimates for Nudging and Growth Mindset datasets }
\label{tab:combined_pi0}
\begin{tabular}{lccc|ccc}
\hline
 & \multicolumn{3}{c|}{\textbf{Nudging}} & \multicolumn{3}{c}{\textbf{Growth mindset}} \\
\textbf{Procedure} & $q=0.1$ & $q=0.2$ & $q=0.3$ & $q=0.1$ & $q=0.2$ & $q=0.3$ \\
\hline
TSSL$(q)$ & 0.62 & 0.56 & 0.51 & 0.93 & 0.85 & 0.78 \\
TSSL$(q')$ & 0.62 & 0.56 & 0.56 & 0.93 & 0.85 & 0.78 \\
Storey$(1/2)$ & 0.28 & 0.28 & 0.28 & 0.75 & 0.75 & 0.75 \\
Storey$(q)$ & 0.40 & 0.36 & 0.33 & 0.77 & 0.70 & 0.73 \\
LSL & 0.39 & 0.39 & 0.39 & 0.77 & 0.77 & 0.77 \\
AS$(0.01)$ & 0.38 & 0.35 & 0.33 & 0.72 & 0.70 & 0.72 \\
AS$(0.1, 0.5)$ & 0.29 & 0.29 & 0.29 & 0.80 & 0.80 & 0.80 \\
AS$(0.1)$ & 0.29 & 0.29 & 0.29 & 0.73 & 0.73 & 0.75 \\
\hline
\end{tabular}
\end{table}

\begin{table}[ht]
\centering
\caption{Number of rejections for each procedure (and percentage of total hypotheses rejected) with tuning parameters $\tol$ and $\tol'=\frac{\tol}{1+\tol}$.}
\label{tab:combined_num_rejs}

\begin{tabular}{lccc|ccc}
\hline
 &\multicolumn{3}{c|}{\textbf{Nudging}} & \multicolumn{3}{c}{\textbf{Growth mindset}} \\
\textbf{Procedure} &$q=0.1$ &  $q=0.2$ & $q=0.3$ & $q=0.1$ & $q=0.2$ & $q=0.3$ \\
\hline
SL$(\tol)$ &99 (38\%) & 115 (44\%) & 129 (49\%) & 9 (7\%) & 18 (15\%) & 27 (22\%) \\
TSSL$(q)$ &115 (44\%) & 129 (49\%) & 162 (62\%) & 12 (10\%) & 27 (22\%) & 34 (28\%) \\
TSSL$(q')$ &115 (44\%) & 129 (49\%) & 131 (50\%) & 9 (7\%) & 18 (15\%) & 27 (22\%) \\
Storey$(1/2)$ &129 (49\%) & 162 (62\%) & 182 (70\%) & 18 (15\%) & 27 (22\%) & 34 (28\%) \\
Storey$(q)$ &129 (49\%) & 162 (62\%) & 174 (67\%) & 18 (15\%) & 27 (22\%) & 34 (28\%) \\
LSL &129 (49\%) & 162 (62\%) & 162 (62\%) & 18 (15\%) & 27 (22\%) & 34 (28\%) \\
AS$(0.01)$ &129 (49\%) & 162 (62\%) & 174 (67\%) & 18 (15\%) & 27 (22\%) & 34 (28\%) \\
AS$(0.1, 0.5)$ &129 (49\%) & 162 (62\%) & 182 (70\%) & 18 (15\%) & 27 (22\%) & 34 (28\%) \\
AS$(0.1)$ &129 (49\%) & 162 (62\%) & 182 (70\%) & 18 (15\%) & 27 (22\%) & 34 (28\%) \\
\hline
\end{tabular}
\end{table}

\begin{table}
\centering
\caption{Adaptive BH procedures on Growth Mindset Data using tolerance level $q = 0.1$ and $q' = 0.91$}
\label{tab:macnamara bh}
\begin{tabular}{>{\raggedright\arraybackslash}p{0.15\linewidth}>{\raggedright\arraybackslash}p{0.15\linewidth}>{\raggedright\arraybackslash}p{0.15\linewidth}>{\raggedright\arraybackslash}p{0.15\linewidth}}
 & & &\\
 \textbf{Procedure} & Number of rejections& FDR of first half of rejections& FDR of second half of rejections\\
\hline

BH$(\tol)$& 20 & 0.0476 & 0.1198 \\
TST$(q)$& 25 & 0.0358 & 0.1501 \\
TST$(q')$& 20 & 0.0398 & 0.1002 \\
Storey$(1/2)$ & 27 & 0.0386 & 0.1404 \\
Storey$(q)$ & 27 & 0.0391 & 0.1425 \\
LSL & 27 & 0.0396 & 0.1443 \\
AS$(0.01)$ & 28 & 0.0436 & 0.1553 \\
AS$(0.1, 0.5)$ & 27 & 0.0409 & 0.1488 \\
AS$(0.1)$ & 27 & 0.0371 & 0.1352 \\

\end{tabular}

\end{table}

Table \ref{tab:macnamara bh} shows the results of the Benjamini-Hochberg analogues to the adaptive SL procedures. We ran the BH procedures at level $\tol = 0.1$ because BH is less conservative than the support line procedure. We see that these BH procedures reject more \textit{p}-values than the SL procedures at level $\tol = 0.1$ and a similar number to the SL procedures at level $\ell = 0.2$. We can also observe that the \textit{p}-values in the rejection set are quite heterogeneous; the FDR in the second half of the rejection set for each procedure is more than double that of the first half.   


\section{Discussion}

The adaptive bFDR procedures studied here achieve gains in power over the standard SL procedure as demonstrated in both simulations and on real data sets. Our experiments suggest the two-stage SL procedure and Storey $(\lambda=0.2)$ work well under positive dependence. In the next few paragraphs, we outline connections with existing work and future directions. 

\begin{figure}[h!]
    \centering

    \begin{minipage}{0.4\textwidth}
        \centering
        \includegraphics[width=\linewidth]{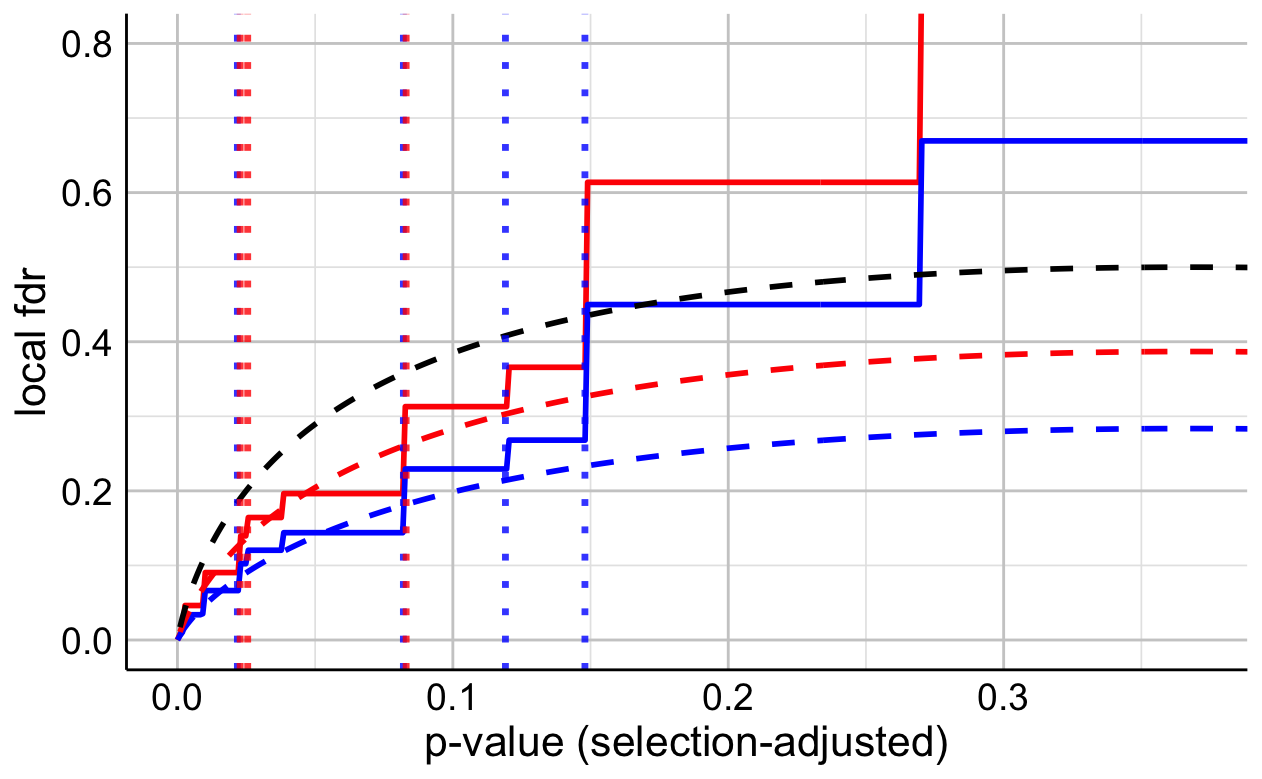}
        \caption*{\footnotesize Psychological nudging \textit{p}-values (selection-adjusted) \citep{mertens2022effectiveness}}
        \label{fig:merten-alpha}
    \end{minipage}
    \hfill
    \begin{minipage}{0.4\textwidth}
        \centering
        \includegraphics[width=\linewidth]{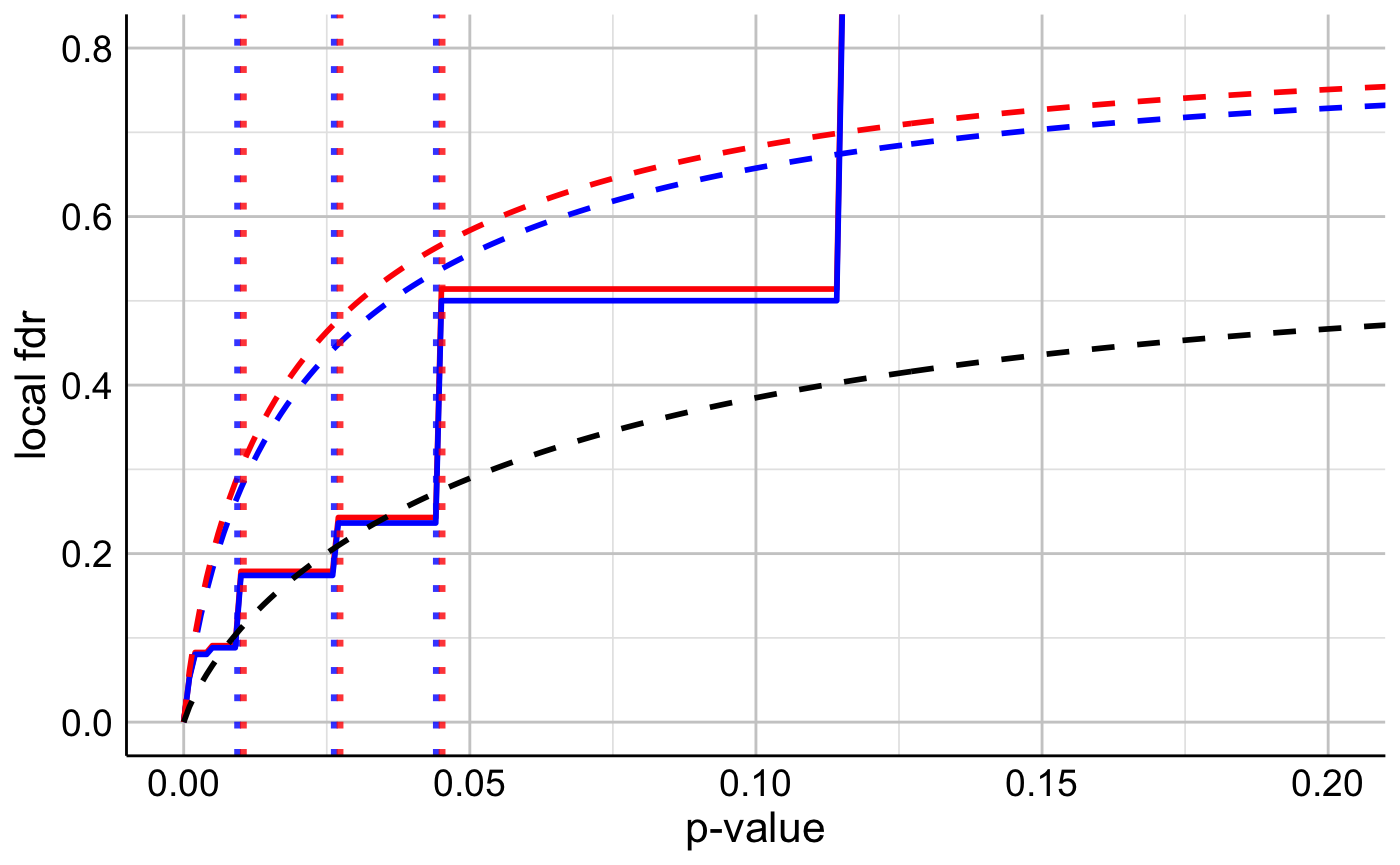}
        \caption*{\footnotesize Growth mindset \textit{p}-values \citep{macnamara2023growth}}
        \label{fig:macnamara-alpha}
    \end{minipage}
    \hfill
    \begin{minipage}{0.18\textwidth}  
        \centering
        \vspace*{-0.8cm}
        \includegraphics[width=\linewidth]{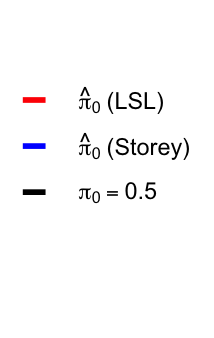}
        \label{fig:legend-lfdr-alpha}
    \end{minipage}

    \caption{$\alpha_{\hat \pi_0}(p)$ (dashed lines) and $\widehat{\lfdr}_{\hat \pi_0}(p)$ (solid lines) with vertical lines at \textit{p}-value cutoffs for $\tol \in \{0.1, 0.15, 0.2, 0.25, 0.3\}$. }
    \label{fig:lfdr-and-alpha}
\end{figure}

\paragraph{Dependence.}
More work is needed to establish bFDR guarantees under dependence and to identify whether certain forms of dependence can preserve the finite-sample bFDR control for the Support Line procedure and its adaptive variants.
Figure \ref{fig:corr bfdr} illustrates the importance of using procedures that maintain bFDR control under dependence,
since some procedures that had high power under independence also exhibited the poor performance in terms of bFDR control under correlation. 
More effort is needed to explore performance of adaptive SL procedures under a broader class of dependence structures beyond the equicorrelated setting considered in this paper. Additionally, since Adaptive Storey with the stopping rule described in \ref{sec:main} had drastically different bFDR under equicorrelation for different choices of $\delta$, it would be useful to develop guidelines for choosing $\delta$ under both independence and dependence assumptions. 



\paragraph{Calibration of $p$-values.}

\cite{sellke2012calibration} define a calibration function for $p$-values that can be interpreted as a lower bound on posterior probability of Type I error when the non-null $p$-values are drawn from a Beta$(\xi, 1)$ distribution. Assuming equal prior odds for the truth or falsehood of each null hypothesis, the local fdr for a $p$-value equal to $t$ is lower bounded by
\begin{align}
\label{eq:sellke-fun}
    \alpha(t) &= \frac{t\log (1/t)}{e^{-1}+t\log (1/t)}, 
    \hspace{2em} t<1/e.
\end{align}
If instead of assuming equal prior odds, the estimate $\hat \pi_0$ is plugged in for $\pi_0$, one arrives at:
\begin{align}
\label{eq:sellke-fun-adaptive}
    \alpha_{\hat \pi_0}(t) = \frac{t \log(1/t)}{e^{-1}\left( \frac{1 - \hat \pi_0}{\hat \pi_0} \right)+t \log(1/t) }, 
    \hspace{2em} t<1/e.
\end{align}
Figure \ref{fig:lfdr-and-alpha} shows plots of $\alpha(t),\alpha_{\hat{\pi}_0}(t)$ and $\widehat{\lfdr}_{\hat \pi_0}(t)$ using Storey ($\lambda=1/2$) and LSL estimates for $\hat \pi_0$. Vertical lines are drawn at the common cutoff values $\tol = 0.1, 0.15, 0.2, 0.25, 0.3$. For the growth mindset $p$-values, the local fdr estimates roughly agree with the calibration method of \cite{sellke2012calibration}. For the nudge data, $\alpha(t)$ is more conservative than $\widehat{\lfdr}(t)$. This is likely because the $\hat \pi_0$ estimates given by the Storey and LSL procedures are smaller than 0.5 (see Table \ref{tab:combined_pi0}), which is the implicit value used in the derivation of \eqref{eq:sellke-fun}. For the growth mindset dataset, where $\pi_0$ is estimated to be larger, the adaptive version \eqref{eq:sellke-fun-adaptive} is more conservative. The \textit{p}-value cutoffs should not be directly compared between the two datasets, as the nudging \textit{p}-values are selection-adjusted\footnote{The \textit{p}-values are adjusted for publication bias by restricting the set of one-sided \textit{p}-values to those below 0.025 and then rescaling by a factor of 40.} while the growth mindset \textit{p}-values are not. We chose not to adjust the growth mindset \textit{p}-values because \cite{tipton2023meta} found no evidence of publication bias after accounting for heterogeneity. 


\paragraph{Dominance of adaptive procedures.}
Numerically, we have found that using \textit{some} estimator for $\pi_0$ is better than using none when $m_0$ is not too close to $m$, in that there is substantial power to be gained while maintaining provable boundary FDR control. However, theoretical guarantees on TSSL and Storey SL in \ref{thm:bfdr-main} and \ref{thm:AS-bFDR-control} are both inequalities, which allows for the possibility of matching or even under-performing relative to the standard SL procedure, particularly if $\pi_0$ is close to 1. We see examples of this in Figure \ref{fig:power_heatmaps} and Table \ref{tab:combined_num_rejs}. The simulation results in Section \ref{sec:type1-sim} suggest these upper bounds can be conservative, while Theorem \ref{thm:asymptotic-lfdr-control} (see Section \ref{sec:appendix-asymptotic}) characterizes the gap between the realized boundary lfdr and the upper bound $\tol/(1-\tol)$ when the number of tests grows large, notably without requiring independence. An avenue for future research is to investigate when adaptive bFDR procedures provide a definitive theoretical advantage over the original SL procedure.

\section*{Acknowledgements}

We thank Yoav Benjamini, William Fithian, Zijun Gao, Nikolaos Ignatiadis, Etienne Roquain and Jake Soloff for thoughtful discussions related to this work. \\

\noindent\textbf{Reproducibility.} Code to reproduce all figures is available on Github, at the repository: \href{https://github.com/sarahmostow/sarahmostow.github.io/tree/main/Adaptive%20Support%20Line}{github.com/sarahmostow/sarahmostow.github.io/tree/main/Adaptive\%20Support\%20Line}

\section{Proofs}
\label{sec:proofs}

\textbf{Theorem \ref{thm:bfdr-main}.} 
    Let $H_1,\dots,H_m$ denote $m$ null hypotheses, with independent $p$-values $p_1,\dots,p_m$. Suppose that $p_i \sim \text{Uniform}(0,1)$ if $H_i$ is true. Then
    \begin{align*}
        \bFDR(\mathcal{R}_2) \leq \frac{\tol}{1-\tol},
    \end{align*}
where $\mathcal{R}_2 \coloneqq \{i : p_i \leq p_{(R_2)}\}$ is the rejection set for the two stage procedure.

\begin{proof}
We decompose the bFDR into two pieces:
\begin{align}
\label{eq:bFDR-decomposition}
    \P(H_{(R_2)}\text{ is true}, R_2 > 0) = \P(H_{(R_2)}\text{ is true},R_1=R_2) + \P(H_{(R_2)}\text{ is true},R_1 < R_2),
\end{align}
with the convention that $H_{(0)}\coloneqq$ false. For the first piece in \eqref{eq:bFDR-decomposition}, note by exchangeability
\begin{align*}
    \P(H_{(R_2)}\text{ is true},R_1=R_2) = m_0 \P(p_{(R_1)}=p_m,R_1=R_2),
\end{align*}
where we assumed wlog that $H_m$ is true. Let $q_{(1)}\leq \dots \leq q_{(m-1)}$ denote the order statistics of $p_1,\dots,p_{m-1}$, and define
\begin{align*}
    \Delta_k \coloneqq \frac{\tol k}{m}-q_{(k)}, \hspace{2em}k=0,\dots,m,
\end{align*}
where $q_{(m)}\coloneqq 1$ and $q_{(0)}\coloneqq 0$. On the event $\{p_{(R_1)}=p_m\}$, we have for some $k\in \{0,\dots,m-1\}$ that $q_{(k)} < p_m < q_{(k+1)}$ and
\begin{align*}
    \frac{\tol (k+1)}{m}-p_m \geq \big(\max_{j \leq k} \Delta_j\big) \vee \Big(\max_{j>k} \Delta_j + \frac{\tol}{m} \Big).
\end{align*}
When non-empty, the intersection of these two constraints is an interval $I_k$, of the form:
\begin{align}
\label{eq:interval-k}
    I_k \coloneqq \Big(q_{(k)}, \; \tol k/m- (\max_{j > k}\Delta_j) \vee \big(\max_{j\leq k}\Delta_j - \tol/m \big)  \Big),
\end{align}
where we define $I_k \coloneqq \varnothing$ if the right endpoint of \eqref{eq:interval-k} is smaller than the left endpoint. It follows from disjointness of $I_0,\dots,I_{m-1}$ that
\begin{align}
\label{eq:first-piece}
    \P(p_{(R_2)}=p_m,R_1=R_2 \mid p_{-m}) \leq \P(\cup_{k=0}^{m-1}\; \{p_m \in I_k\}\mid p_{-m}) = \sum_{k=0}^{m-1}\P(p_m \in I_k\mid p_{-m}).
\end{align}
Next, we consider the case $R_1 < R_2$, which in particular implies $R_1 < m$, and define:
\begin{align*}
    \Delta_k' \coloneqq \frac{\tol k}{m-R_1(1)} - q_{(k)}, \hspace{2em} k=0,\dots,m,
\end{align*}
where $R_1(p_m)$ denotes the value of $R_1$ for a given value of $p_m$, holding $p_1,\dots,p_{m-1}$ fixed (see \ref{lem:p-to-one} for definition of $R_1( \cdot )$). Note that on the event $\{R_1 < R_2\} \cap\{p_{(R_2)}=p_m\}$ we have by Lemma \ref{lem:p-to-one} that $R_1(p_m)=R_1(1)$.
For the second piece in \eqref{eq:bFDR-decomposition}, it follows that on the event $\{R_1 < R_2\} \cap\{p_{(R_2)}=p_m\}$, we have for some $k$ that $q_{(k)}<p_m < q_{(k+1)}$ and 
\begin{align*}
    \frac{\tol(k+1)}{m-R_1(1)}- p_m &\geq \big(\max_{j \leq k} \Delta_j'\big) \vee \Big(\max_{j>k} \Delta_j' + \frac{\tol}{m-R_1(1)} \Big) \\
    \frac{\tol (k+1)}{m}-p_m &< \big(\max_{j \leq k} \Delta_j\big) \vee \Big(\max_{j>k} \Delta_j + \frac{\tol}{m} \Big),
\end{align*}
which rearrange to yield the following constraints: for some $k$, 
\begin{align}
\label{ineq:lb1}
q_{(k)} <p_m &<q_{(k+1)} \\
\nonumber
    p_m &\leq \frac{\tol k}{m-R_1(1)} - \big( \max_{j > k}\Delta_j' \big) \vee \Big( \max_{j \leq k} \Delta_j' - \frac{\tol}{m-R_1(1)} \Big)\\
    \label{ineq:lb2}
    p_m &> \frac{\tol k}{m}- \big( \max_{j> k} \Delta_j\big) \vee \big( \max_{j\leq k}\Delta_j - \tol/m  \big).
\end{align}
For each $k$, we define the analogous interval to $I_k$ as 
\begin{align*}
    I_k' \coloneqq \Big(q_{(k)}, \; \frac{\tol k}{m-R_1(1)} - \big( \max_{j > k}\Delta_j' \big) \vee \Big( \max_{j \leq k} \Delta_j' - \frac{\tol}{m-R_1(1)} \Big) \Big),
\end{align*}
where $I_k'\coloneqq \varnothing$ if the right endpoint of the above is smaller than the left endpoint. The two lower constraints \eqref{ineq:lb1} and \eqref{ineq:lb2} for the value of $p_m$ prompt us to consider two cases. 

\paragraph{Case 1:} $q_{(k)} > \frac{\tol k}{m}- \big( \max_{j> k} \Delta_j\big) \vee \big( \max_{j\leq k}\Delta_j - \tol/m  \big)$. \\

\noindent In this case, $I_k$ as defined in \eqref{eq:interval-k} is empty, and $I_k'$ is the set over which $p_m$ achieves the maximum (in the second stage) as the $(k+1)^{\th}$ order statistic, which has length:
\begin{align*}
    \Delta_{k}' - \big( \max_{j > k}\Delta_j' \big) \vee \Big( \max_{j \leq k} \Delta_j' - \frac{\tol}{m-R_1(1)} \Big).
\end{align*}

\paragraph{Case 2:} $q_{(k)} \leq \frac{\tol k}{m}- \big( \max_{j> k} \Delta_j\big) \vee \big( \max_{j\leq k}\Delta_j - \tol/m  \big)$. \\

\noindent In this case, $I_k$ is non-empty and the length of the set over which $p_m$ achieves the maximum (in the second stage) as the $(k+1)^{\th}$ order statistic is:
\begin{align*}
    \frac{\tol k}{m-R_1(1)} - \big( \max_{j > k}\Delta_j' \big) \vee \Big( \max_{j \leq k} \Delta_j' - \frac{\tol}{m-R_1(1)} \Big) - \left[\frac{\tol k}{m}- \big( \max_{j> k} \Delta_j\big) \vee \big( \max_{j\leq k}\Delta_j - \tol/m  \big) \right].
\end{align*}
Adding and subtracting $q_{(k)}$, we re-express the above length as:
\begin{align*}
    \underbrace{\Delta_k' - \big( \max_{j > k}\Delta_j' \big) \vee \Big( \max_{j \leq k} \Delta_j' - \frac{\tol}{m-R_1(1)} \Big)}_{\P(p_m \in I_k' \mid p_{-m})} - \underbrace{\big[\Delta_k - \big( \max_{j> k} \Delta_j\big) \vee \big( \max_{j\leq k}\Delta_j - \tol/m  \big) \big]}_{\P(p_m \in I_k \mid p_{-m})},
\end{align*}
since $p_m \mid p_{-m} \sim \text{Uniform}(0,1)$, where $p_{-m}\coloneqq (p_1,\dots,p_{m-1})$. In summary, we have shown
\begin{align*}
    \P(p_{(R_2)}=p_m,R_1=R_2 \mid p_{-m}) &\leq \sum_{k=0}^{m-1}\P(p_m \in I_k\mid p_{-m}) \\
    \P(p_{(R_2)}=p_m,R_1<R_2 \mid p_{-m}) &\leq \sum_{k=0}^{m-1} \P(p_m \in I_k'\mid p_{-m}) - \P(p_m \in I_k\mid p_{-m}).
\end{align*}
The sum is therefore bounded by
\begin{align*}
\P(p_{(R_2)}=p_m\mid p_{-m}) &= \P(p_{(R_2)}=p_m,R_1=R_2\mid p_{-m})+\P(p_{(R_2)}=p_m,R_1<R_2\mid p_{-m}) \\
    &\leq  \sum_{k=0}^{m-1} (\P(p_m \in I_k'\mid p_{-m}) - \P(p_m \in I_k\mid p_{-m})) + \P(p_m \in I_k\mid p_{-m}) \\
    &= \sum_{k=0}^{m-1} \Delta_k' - \big( \max_{j > k}\Delta_j' \big) \vee \Big( \max_{j \leq k} \Delta_j' - \frac{\tol}{m-R_1(1)} \Big),
\end{align*}
which is telescoping and equal to $\frac{\tol}{m-R_1(1)}$. Taking expectation on both sides, we have shown
\begin{align*}
    \P(H_{(R_2)}\text{ is true},R_2>0) \leq \E \Big[ \frac{m_0 \tol}{m-R_1(1)} \Big].
\end{align*}
We show in Lemma \ref{lem:expectation-bound} that the right hand side of the above is $\leq \frac{\tol}{1-\tol}$, completing the proof.
\end{proof}

\noindent \textbf{Proposition \ref{thm:AS-bFDR-control}.}
    Let $H_1,\dots,H_m$ be null hypotheses, and suppose that for any $i$ where $H_i$ is true, we have that $p_i \sim \text{Uniform}(0,1)$ and that $p_i$ is independent of $p_{-i}:=(p_{j}:j \in [m]\backslash \{i\})$. Then
\begin{align*}
    \bFDR(\mathcal{R}_{\tol/\hat{\pi}_0}) \leq \tol.
\end{align*}

\begin{proof}
    Suppose wlog $H_m$ is true. By exchangeability of the null $p$-values, and the tower property, we have
    \begin{align}
    \label{eq:AS-piece1}
        \bFDR = m_0 \P(p_{(R_{\tol/\hat{\pi}_0})}=p_m ) = m_0 \E\big[ \P(p_{(R_{\tol/\hat{\pi}_0})}=p_m \mid p_{-m},1\{p_m \leq \tol\} ) \big].
    \end{align}
    Define $p_{i}' := p_i / \tol$, $m' := \# \{i: p_i \leq \tol\}$, and $\tol' := \frac{m'}{m\hat{\pi}_0}$. Then
    \begin{align*}
        R_{\tol/\hat{\pi}_0} &:= \argmin_{k:p_{(k)}\leq \tol} \Big\{ p_{(k)} - \frac{\tol k}{\hat{\pi}_0 m} \Big\} \\
        &= \argmin_{k:p_{(k)}\leq \tol} \Big\{ \frac{p_{(k)}}{\tol} - \frac{k}{\hat{\pi}_0 m} \Big\} = \argmin_{k=0,1,\dots,m'} \Big\{ p_{(k)}' - \frac{\tol' k}{m'} \Big\}.
    \end{align*}
    Notice $\{p_{(R_{\tol/\hat{\pi}_0})}=p_m\}$ implies $\{p_m \leq \tol\}$. It follows from Lemma 2 of \cite{soloff2024edge} (restated in Lemma \ref{lem:SL-key-lemma} below) 
    that
    \begin{align*}
        \P(p_{(R_{\tol/\hat{\pi}_0})}=p_m \mid p_{-m},1\{p_m \leq \tol\} ) \leq \frac{\tol'\; 1\{p_m \leq \tol\}}{m'} = \frac{1\{p_m \leq \tol\}}{m\hat{\pi}_0} = \frac{1\{p_m \leq \tol\}}{(m-1) \hat{\pi}_0(p_{-m})}
    \end{align*}
    since $\hat{\lambda}\geq \tol$ implies $\hat{\pi}_0(p_1,\dots,p_m) = \frac{m-1}{m} \; \hat{\pi}_0(p_1,\dots,p_{m-1})$ on the event $\{p_m \leq \tol\}$. Taking expectation on both sides while retaining the conditioning on $p_{-m}$, and noting that $p_m \sim \text{Uniform}(0,1)$ independently of $p_{-m}$, we have
    \begin{align*}
        \P(p_{(R_{\tol/\hat{\pi}_0})}=p_m \mid p_{-m}) \leq \frac{1}{(m-1)\hat{\pi}_0 (p_{-m})} \; \E\big[ 1\{p_m \leq \tol\} \mid p_{-m}\big] = \frac{\tol}{(m-1) \hat{\pi}_0 (p_{-m})}.
    \end{align*}
    We were able to pull out $1/\hat{\pi}_0(p_{-m})$ from the expectation because $\hat{\lambda}$ is a function of $p_{-m}$ on the event $\{p_m \leq \tol\}$, and thus $\hat{\pi}_0$ is determined by $p_{-m}$. Next, notice that
    \begin{align*}
        \frac{\tol}{(m-1)} \; \E\Big[\frac{1}{\hat{\pi}_0(p_{-m})} \Big] &= \frac{\tol}{(m-1)} \; \E\Big[\frac{(m-1)(1-\hat{\lambda})}{1+\# \{i < m: p_i > \hat{\lambda}\}} \Big]  \\
        &\leq \tol \; \E\Big[\frac{1-\hat{\lambda}}{1+\# \{i < m: H_i \text{ is true}, \; p_i > \hat{\lambda}\}} \Big] .
    \end{align*}
    Since $\hat{\lambda}\in [\tol,1]$ is a stopping time, and the expression inside the expectation is a super-martingale (see proof of Theorem 1 in \cite{gao2025adaptive}) the OST implies that the above is bounded by
    \begin{align}
    \label{eq:AS-piece2}
        &\leq \tol \; \E \Big[ \frac{1-\tol}{1+\# \{i<m:H_i\text{ is true},\; p_i > \tol\}} \Big] \leq \tol \; \frac{1-\tol}{ (1+(m_0-1)) (1-\tol)}
    \end{align}
    following from standard Binomial calculations (see, e.g. the proof of Theorem 1 in \cite{gao2025adaptive}). Together, \eqref{eq:AS-piece1} and \eqref{eq:AS-piece2} imply
    \begin{align*}
        \bFDR = m_0 \P(p_{(R_{\tol/\hat{\pi}_0})}=p_m) \leq m_0 \; \frac{\tol (1-\tol)}{m_0 (1-\tol)} = \tol,
    \end{align*}
    completing the proof.
\end{proof}

\subsection{Technical Lemmas}

\begin{lemma}
\label{lem:p-to-one}
    Let $P_{(-m)}(t) \in [0,1]^m$ denote the vector of p-values obtained by replacing $p_m$ with $t$, and let $R_1(t): [0, 1]^m \rightarrow \{0, ..., m\}$ denote the number of rejections made in stage one of the TSSL procedure run on the vector $P_{(-m)}(t)$. If $p_m > p_{(R_1)}$, then $R_1(p_m) = R_1(1)$. 
\end{lemma}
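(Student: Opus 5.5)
We prove Lemma \ref{lem:p-to-one} by comparing the SL objective for the vector with $p_m$ against the vector with $p_m$ replaced by $1$. Write $F_t(k) \coloneqq \tol k/m - p_{(k)}(t)$ for the first-stage objective computed on $P_{(-m)}(t)$, so that $R_1(t) = \argmax_k F_t(k)$. As before, let $q_{(1)}\leq\dots\leq q_{(m-1)}$ be the order statistics of $p_{-m}$. Assume $p_m > p_{(R_1)}$ and set $r \coloneqq R_1(p_m)$. The hypothesis says that $p_m$ is not among the $r$ smallest values. Hence $p_{(k)}(p_m) = q_{(k)}$ for all $k \le r$, and we have the general pointwise relation $p_{(k)}(t) \le p_{(k)}(1)$ for every $k$ and every $t$.

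\emph{Step 1: the objectives agree up to $r$.} Replacing $p_m$ by $1$ does not change the $r$ smallest order statistics. These are $q_{(1)},\dots,q_{(r)}$ in both vectors, because $p_m$ already exceeds them and $1$ does as well. So $F_1(k) = F_{p_m}(k)$ for all $k \le r$.

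\emph{Step 2: the objectives can only decrease beyond $r$.} Raising a single coordinate can only raise order statistics, so $p_{(k)}(1) \ge p_{(k)}(p_m)$ for every $k$. This gives $F_1(k) \le F_{p_m}(k) \le F_{p_m}(r) = F_1(r)$ for all $k$. Therefore $r$ is still a maximizer of $F_1$, and indices $k<r$ cannot strictly beat it since $F_1 = F_{p_m}$ there.

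\emph{Step 3: tie-breaking.} The only delicate point is how the $\argmax$ breaks ties. We will adopt the same convention the paper uses implicitly, say the largest maximizer. Consider any $k>r$ with $F_1(k) = F_1(r)$. Then $F_{p_m}(k) \ge F_1(k) = F_{p_m}(r)$, so $k$ would also have been a maximizer for $p_m$, contradicting that $r$ is the largest maximizer there. With the smallest-maximizer convention, Step 1 handles the argument directly instead. Under continuity, ties occur with probability zero, so this matters only on a null set.

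We conclude $R_1(1) = r = R_1(p_m)$. The edge case $r=0$ is covered by the same argument, using the convention $p_{(0)}=0$. We expect the only obstacle to be stating the tie-breaking convention cleanly; the monotonicity in Step 2 is the substantive content.
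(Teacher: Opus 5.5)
Your proof is correct and follows essentially the same route as the paper: moving $p_m$ to $1$ leaves the first-stage objective unchanged at and below $R_1(p_m)$ and cannot increase it at any other index, so the maximizer is preserved. Your use of coordinatewise monotonicity of order statistics is a slightly cleaner substitute for the paper's explicit rank-shift computation (the new objective above the rank of $p_m$ equals $\Delta_{j+1}-\tol/m$). Your explicit treatment of ties also fills in a point the paper handles only implicitly, by writing a strict inequality that presumes a unique maximizer.
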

\begin{proof}
    Suppose $p_m$ is the $(k+1)^{\th}$ order statistic and consider what happens to the number of rejections in Stage 1 if we replace $p_m$ with $1$. By assumption $R_1(p_m) \leq k$, and since moving $p_m$ to 1 shifts the ranks of $p_{(k+2)},\dots,p_{(m)}$ down by 1, we have 
    \begin{align*}
        \Big[\max_{j \in \{0,\dots,k\}\backslash \{R_{1}(p_m)\}} \Delta_j \Big] \vee \Big[\max_{j > k+1} \Delta_j- \frac{\tol}{m} \Big] \leq \max_{j \in \{0,\dots,m\}\backslash \{R_1(p_m)\}}\Delta_j < \Delta_{R_1(p_m)},
    \end{align*}
    which implies $R_1(p_m) = R_1(1)$.
\end{proof}

\begin{lemma}[Adapted from Lemma 1 in \cite{benjamini2006adaptive}]
    \label{lem: expectation-bound-sub-lemma}
    For any $n \geq 2$, if  $Y \sim Binomial(n, p)$, then $\mathbb{E}\bigl[\frac{1}{Y+1} \bigr] \leq \frac{1}{(n+1)p}$. 
    \begin{proof}
        We have 
        \begin{align*}
            \mathbb{E}\left[\frac{1}{Y+1} \right] &= \sum_{k=0}^n \frac{1}{k+1} \binom{n}{k}p^k(1-p)^{n-k} \\
            &= \sum_{k=0}^n \frac{1}{n+1} \binom{n+1}{k+1}p^k(1-p)^{n-k} \\
            &= \frac{1}{p(n+1)} \sum_{j=1}^{n+1}  \binom{n+1}{j}p^j(1-p)^{(n+1) - j} \\
            &= \frac{1}{p(n+1)} \bigg( \sum_{j=0}^{n+1}  \binom{n+1}{j}p^j(1-p)^{(n+1) - j} - (1-p)^{n+1} \bigg) \\
        \end{align*}
Hence, the Law of Total Probability implies
\begin{align*}
        \mathbb{E} \left[ \frac{1}{Y+1} \right] &= \frac{1 - (1-p)^{n+1}}{p(n+1)} \\
        & \leq \frac{1}{(n+1)p} \\
        \end{align*}
    \end{proof}
\end{lemma}

\begin{lemma}[Lemma 2 in \cite{soloff2024edge}]
\label{lem:SL-key-lemma}
Let $p_1,\dots,p_{m-1}\in [0,1]$ be fixed (non-random) and let $p_m \sim~\text{Uniform}(0,1)$. Then for the support line procedure defined by \eqref{eq:sl}, we have
\begin{align*}
    \P(p_{(R)}=p_m) \leq \frac{\tol}{m},
\end{align*}
with equality if $\tol \leq 1$.
\end{lemma}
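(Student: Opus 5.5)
This is Lemma 2 of \cite{soloff2024edge}, and I would prove it by conditioning on the fixed values $p_1,\dots,p_{m-1}$ and computing the set of $p_m$ for which $p_m$ is the boundary rejection, exactly as in the interval decomposition used in the proof of Theorem~\ref{thm:bfdr-main}.

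Let $q_{(1)}\leq\dots\leq q_{(m-1)}$ be the order statistics of the fixed values, with $q_{(0)}=0$, $q_{(m)}=1$, and set $\Delta_k=\tol k/m-q_{(k)}$ for $k=0,\dots,m$. Suppose $q_{(k)}<p_m<q_{(k+1)}$, so $p_m$ is the $(k+1)$th order statistic; ties have probability zero. Then the SL objective $\tol j/m-p_{(j)}$ equals $\Delta_j$ for $j\leq k$, equals $\tol(k+1)/m-p_m$ at $j=k+1$, and equals $\Delta_{j-1}+\tol/m$ for $j>k+1$, because the ranks of the later $q$'s shift up by one. Hence $p_{(R)}=p_m$ forces
\begin{align*}
\frac{\tol(k+1)}{m}-p_m\geq \Big(\max_{j\leq k}\Delta_j\Big)\vee\Big(\max_{j>k}\Delta_j+\frac{\tol}{m}\Big).
\end{align*}
This is exactly the interval $I_k$ of \eqref{eq:interval-k}, with the maximum over $j>k$ running over $j=k+1,\dots,m$ and including $q_{(m)}=1$. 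Its length is
\begin{align*}
|I_k|=\Big[\Delta_k-\Big(\max_{j>k}\Delta_j\Big)\vee\Big(\max_{j\leq k}\Delta_j-\frac{\tol}{m}\Big)\Big]_+ .
\end{align*}
The sets $I_k$ are disjoint, so $\P(p_{(R)}=p_m)\leq\sum_{k=0}^{m-1}|I_k|$, since $p_m$ is uniform.

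The main step is to show that this sum is at most $\tol/m$. Let $M_k=\max_{j\geq k}\Delta_j$, so that $M_0=\max_j\Delta_j$ and $M_m=\Delta_m=\tol-1$. The term $|I_k|$ is nonzero only if $\Delta_k-M_{k+1}>0$, that is, only if $k$ is a strict "record from the right" index. In that case $|I_k|\leq \Delta_k-M_{k+1}=M_k-M_{k+1}$. Moreover $|I_k|\leq \Delta_k-\max_{j\leq k}\Delta_j+\tol/m\leq\tol/m$, and this bound equals $\tol/m$ only when $\Delta_k$ is the global maximum.

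Let $k^*$ be the largest maximizer of $\Delta$ over $\{0,\dots,m-1\}$. For $k<k^*$, the term $\max_{j\leq k}\Delta_j-\tol/m$ bounds things usefully. I would telescope along the right-records $k_1<k_2<\dots$ up to $k^*$, writing
\begin{align*}
|I_{k_i}|\leq \Delta_{k_i}-\Big(\Delta_{k_{i+1}}\vee(\Delta_{k_{i-1}}-\tol/m)\Big).
\end{align*}
The ordering is $\Delta_{k_1}>\Delta_{k_2}>\dots$ going right after the global max, and left of $k^*$ there are no right-records, since anything left of the max with $\Delta$ smaller is not a right-record. So only $k\geq k^*$ contribute, and for those $\max_{j\leq k}\Delta_j=\Delta_{k^*}$. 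The sum then becomes
\begin{align*}
\sum_i\Big[\Delta_{k_i}-\max\big(\Delta_{k_{i+1}},\Delta_{k^*}-\tol/m\big)\Big]_+ ,
\end{align*}
where the maximum is taken over the decreasing record chain $\Delta_{k^*}=\Delta_{k_1}>\Delta_{k_2}>\dots$. This telescopes to at most $\Delta_{k^*}-(\Delta_{k^*}-\tol/m)=\tol/m$. The obstacle is exactly this bookkeeping; the paper states the telescoping identity without proof in Theorem~\ref{thm:bfdr-main}.

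For equality when $\tol\leq1$, the chain must reach below $\Delta_{k^*}-\tol/m$ before terminating. The final record is $M_m=\tol-1$, attained at $k=m$, and this is not an index of $p_m$'s placement. I would check that $\Delta_m+\tol/m\leq \Delta_{k^*}-\tol/m$ is not required, because the constraint for the last record uses $\max_{j>k}\Delta_j$, which includes $\Delta_m$. Hence the truncated telescoping sum equals exactly $\tol/m$ whenever $\Delta_{k^*}\geq\Delta_m$. This holds since $\Delta_0=0\geq\tol-1$ when $\tol\leq1$.
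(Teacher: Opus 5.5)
The paper does not reprove this lemma; it quotes it from \cite{soloff2024edge}. Your route is the interval-and-telescoping computation the paper uses in the proof of Theorem~\ref{thm:bfdr-main}, and the plan is sound. The treatment of the top index, however, is wrong.

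Your own rank-shift computation shows that when $p_m$ is the $(k+1)$th order statistic, the objective values after position $k+1$ are $\Delta_i+\tol/m$ for $i=k+1,\dots,m-1$ only. No term corresponds to $q_{(m)}=1$. Putting $\Delta_m=\tol-1$ into $\max_{j>k}\Delta_j$ imposes the spurious constraint $p_m\le 1-\tol(m-k)/m$. So the claim that $p_{(R)}=p_m$ ``forces'' your inequality fails in general once $\tol>m/(m+1)$. Take $m=1$, $\tol=0.8$: $R=1$ iff $p_1<0.8$, so $\P(p_{(R)}=p_1)=0.8$, whereas your $I_0=(0,\,-\max\{\Delta_1,\Delta_0-\tol\})=(0,0.2)$.

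The equality step fails for the same reason. With $\Delta_m$ terminating the record chain, the telescoped sum is exactly $\min\{\tol/m,\ \Delta_{k^*}-\Delta_m\}$. Equality therefore requires $\Delta_m\le\Delta_{k^*}-\tol/m$ (the chain really must end below $\Delta_{k^*}-\tol/m$, as you first say), not merely $\Delta_{k^*}\ge\Delta_m$. In the example, $\Delta_{k^*}=0\ge\Delta_1=-0.2$, yet the sum is $0.2$. As written, your argument covers only $\tol\le m/(m+1)$, where $\Delta_m\le\Delta_0-\tol/m$ makes the extra term inert. The paper's \eqref{eq:interval-k} uses the same convention, with the same caveat.

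The repair is short.
\begin{itemize}
\item Take the maximum over $j>k$ only over $k<j\le m-1$, with the empty maximum equal to $-\infty$, and set $M_k=\max_{k\le j\le m-1}\Delta_j$ and $c=\Delta_{k^*}-\tol/m$. For $k\ge k^*$ each term equals $\max\{M_k,c\}-\max\{M_{k+1},c\}$ with $M_m=-\infty$. The sum is then exactly $\Delta_{k^*}-c=\tol/m$ for every $\tol$.
\item The point $q_{(m)}=1$ enters only through the cap $p_m<1$ on $I_{m-1}$. Equivalently, $\Delta_m-\tol/m$, not $\Delta_m$, belongs in the maximum. The uncapped right endpoint of $I_{m-1}$ is $\tol-\Delta_{k^*}\le\tol$, since $\Delta_{k^*}\ge\Delta_0=0$. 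So the cap is slack for $\tol\le 1$ and can only lower the probability for $\tol>1$, which gives the inequality in general.
\item For equality you also need the reverse inclusion, which you do not state. If $p_m\in I_k$, then $p_m<q_{(k+1)}$: for $k\le m-2$ because $\tol k/m-\Delta_{k+1}=q_{(k+1)}-\tol/m$, and for $k=m-1$ because the endpoint is at most $\tol\le 1$. Hence $p_m$ is the $(k+1)$th order statistic with a maximal objective value, and $p_{(R)}=p_m$ up to a null set of ties.
\end{itemize}
The inclusion $\{p_{(R)}=p_m\}\subseteq\bigcup_k I_k$ together with $\sum_k|I_k|=\tol/m$ gives only ``$\le$'', not ``$=$''.
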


\begin{lemma}
\label{lem:expectation-bound}
Suppose $(p_i)_{i \in \H_0}$ are independent and $p_i \sim \text{Uniform}(0,1)$ when $H_i$ is true. Then
    \begin{align*}
        \mathbb{E} \left[ \frac{\tol m_0}{m - R_1(1)}\right] \leq \frac{\tol}{1 - \tol},
    \end{align*}
    where $R_1(1)$ is the number of rejections in stage 1 of TSSL, holding $p_m$ fixed at 1.
\end{lemma}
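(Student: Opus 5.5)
The plan is to follow the strategy of Lemma 1 in \cite{benjamini2006adaptive}: compare the first-stage SL rejection count $R_1(1)$, computed with $p_m$ set to $1$ and $H_m$ true, to a binomial count of null $p$-values. Throughout, write $m_0$ for the number of true nulls, and note that $p_m=1$ is one of them.

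\textbf{Step 1: bound the rejection threshold.} Every rejected $p$-value satisfies $p_{(R_1)}\le \tol R_1/m\le \tol$. This holds because the argmax in \eqref{eq:sl} is at least its value at $k=0$, which is zero. Hence any true null that is not rejected in stage one must have $p_i>p_{(R_1)}$, and $m-R_1(1)$ is at least the number of non-rejected true nulls.

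\textbf{Step 2: lower bound by a binomial count.} Step 1 gives $m-R_1(1)\ge \#\{i\in\H_0: p_i>p_{(R_1)}\}\ge \#\{i\in\H_0: p_i>\tol\}$. In the second count the index $m$ always contributes, since $p_m=1>\tol$. The remaining $m_0-1$ nulls are independent Uniform$(0,1)$, so this count equals $1+Y$ with $Y\sim\text{Binomial}(m_0-1,1-\tol)$. One subtlety: a null with $p_i\le\tol$ might still be non-rejected. That only makes $m-R_1(1)$ larger, so the inequality goes in the right direction.

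\textbf{Step 3: apply the binomial lemma.} By Lemma \ref{lem: expectation-bound-sub-lemma},
\[
\E\Big[\frac{\tol m_0}{m-R_1(1)}\Big]\le \tol m_0\,\E\Big[\frac{1}{1+Y}\Big]\le \frac{\tol m_0}{m_0(1-\tol)}=\frac{\tol}{1-\tol}.
\]
Lemma \ref{lem: expectation-bound-sub-lemma} is stated for $n\ge2$, but its proof is valid for every $n\ge0$, so the small cases $m_0\in\{1,2\}$ are covered as well. In particular, if $m_0=1$ then $Y\equiv0$ and the bound reads $\tol\le\tol/(1-\tol)$ directly.

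\textbf{Edge cases and the main obstacle.} If $R_1(1)=m$, the denominator in the statement would vanish. This cannot happen here, because $p_m=1$ is not rejected whenever $\tol<1$ (its term in the argmax is negative for $k=m$ relative to $k=m-1$, roughly). Assuming $\tol<1$, the denominator is always at least $1$. The main point needing care is the inequality $p_{(R_1)}\le\tol$, equivalently that every rejected $p$-value lies below $\tol$. I expect this follows cleanly from $\tol R_1/m - p_{(R_1)}\ge 0$ and $R_1\le m$.
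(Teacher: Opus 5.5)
Your proposal is correct and follows the paper's proof essentially step for step. Like the paper, you use $p_{(R_1)}\le\tol$ to bound $m-R_1(1)$ below by one (for the index $m$) plus the number of true nulls among the first $m-1$ exceeding $\tol$, i.e.\ by $1+Y$ with $Y\sim\text{Binomial}(m_0-1,1-\tol)$, and then apply the binomial sub-lemma; your pointwise comparison is, if anything, slightly cleaner than the paper's counting argument.

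The one blemish is your parenthetical justification of $R_1(1)<m$. Comparing $k=m$ with $k=m-1$ fails when $p_{(m-1)}>1-\tol/m$; the comparison that works is with $k=0$. The aside is unnecessary anyway, since Step 1 gives $p_{(R_1)}\le\tol<1=p_m$, and hence $m-R_1(1)\ge 1+Y\ge 1$.
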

\begin{proof}
    Consider the rejection set obtained by running TSSL on $P_{(-m)}(1)$, defined in Lemma \ref{lem:p-to-one}. Define the sets $V$ and $U$ as follows:
    \begin{align*}
        V &:= \#\{ i \leq m-1 : H_i \text{ is true}, \; p_i \leq p_{(R_1)}\} \\ 
        U &:= \#\{ i \leq m-1 : H_i \text{ is true}, \; p_i > p_{(R_1)}\}. 
    \end{align*}
We have by definition
$$(m_0 - 1) + (R_1(1) - V) \leq m$$
Because $m_0 = U + V + 1$, we can write
$$U + 1 \leq m - R_1(1)$$
By definition, $p_{(R_1)} \leq \tol$ which implies
\begin{align*}
    U \geq \# \{i \leq m-1: H_i\text{ is true},\; p_i > \tol \}
\end{align*}
Let $Y \sim \text{Binomial}(m_0 - 1, 1 - \tol)$. We can see that $Y$ must be stochastically smaller than $U$, implying that $m - R_1(1)$ is stochastically larger than $Y + 1$. Thus, by \ref{lem: expectation-bound-sub-lemma} we have $\mathbb{E} \left[ \frac{1}{Y+1} \right] \leq \frac{1}{m_0 \left( 1 - \tol \right)}$. Thus, we have 
\begin{align*}
    \mathbb{E} \left[ \frac{\tol m_0}{m - R_1(1)}\right] =  \tol m_0\mathbb{E}\left[ \frac{1}{m - R_1(1)}\right] \leq \frac{\tol}{1 - \tol}.
\end{align*}

\end{proof}



\subsection{Asymptotic formula for lfdr at the boundary}
\label{sec:appendix-asymptotic}

\begin{thm}
\label{thm:asymptotic-lfdr-control}
    Let $F^{(1)},F^{(2)},\dots$ be a sequence of continuous cdfs, where $p_i \sim F^{(i)}$, and suppose that $\|F_m - \bar{F}_m\|_\infty \to 0$ in probability as $m \to \infty$, where $F_m(t) \coloneqq \frac{1}{m}\sum_{i=1}^m 1\{p_i \leq t\}$ is the empirical cdf and $\bar{F}_m\coloneqq \frac{1}{m}\sum_{i=1}^m F^{(i)}$ is the average among the first $m$ cdfs. Further suppose that $\bar{F}_m$ is concave for each $m$ and that there exist unique values $t_1^*,t_2^* \in (0,1)$ such that
    \begin{align*}
        t_1^* &= \argmin_{t \in [0,1]} \{t - \tol \bar{F}_m(t)\}, \hspace{2em}\bar{F}_m(t^*_1) < 1,\\ 
        t_2^* &= \argmin_{t \in [0,1]} \Big\{t -  \frac{\tol\bar{F}_m(t)}{1-\bar{F}_m(t_1^*)}\Big\}.
    \end{align*}
    Let $\tau_1 \coloneqq  \argmin_{t \in [0,1]} \big\{ t - \tol F_m(t) \big\}$ and $\tau_2 \coloneqq \argmin_{t \in [0,1]} \Big\{ t - \frac{\tol F_m(t)}{1-F_m(\tau_1)} \Big\}$ be the empirical counterparts. Then
    \begin{align*}
        \lim_{m \to\infty} \left| \lfdr_m(\tau_2)- \frac{\tol \bar{\pi}_{0,m}}{1-\bar{F}_m(t_1^*)} \right| = 0,
    \end{align*}
    where $\lfdr_m(t) \coloneqq \bar{\pi}_{0,m} / \bar{f}_m(t)$, $\bar{f}_m(t) \coloneqq \frac{1}{m}\sum_{i=1}^m f^{(i)}(t)$ is the average density, and the true null proportion is denoted $\bar{\pi}_{0,m} \coloneqq \#\{i:H_i=0\}/m$.
    
\end{thm}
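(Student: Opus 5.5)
The plan is an argmin-continuity argument carried out in two stages, mirroring the two stages of TSSL. Once the stage-two threshold $\tau_2$ is shown to be close to the deterministic $t_2^*$, the first-order optimality condition at $t_2^*$ pins down $\bar f_m(t_2^*)$. That gives $\lfdr_m(t_2^*)$ exactly, and a continuity step transfers the value to $\tau_2$.

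\textbf{Stage one.} Write $M_m(t) \coloneqq t - \tol F_m(t)$ and $\bar M_m(t) \coloneqq t - \tol \bar F_m(t)$. Then $\sup_t |M_m - \bar M_m| \le \tol \|F_m - \bar F_m\|_\infty \to 0$ in probability. The function $\bar M_m$ is convex because $\bar F_m$ is concave, and $t_1^*$ is its unique minimizer.
\begin{itemize}
\item The standard argmin argument gives $\bar M_m(\tau_1) - \bar M_m(t_1^*) \le 2\sup_t|M_m - \bar M_m| \to 0$.
\item Uniqueness of the minimizer of a convex function then forces $\tau_1 - t_1^* \to 0$.
\item It follows that $F_m(\tau_1) - \bar F_m(t_1^*) \to 0$, using $\|F_m-\bar F_m\|_\infty\to0$ and continuity of $\bar F_m$.
\item Since $\bar F_m(t_1^*)<1$, the random level $c_m \coloneqq \tol/(1-F_m(\tau_1))$ converges to $c^* \coloneqq \tol/(1-\bar F_m(t_1^*))$.
\end{itemize}

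\textbf{Stage two.} The same reasoning applies to $t - c_m F_m(t)$ versus $t - c^* \bar F_m(t)$. Their uniform difference is at most $|c_m - c^*| + c^*\|F_m-\bar F_m\|_\infty \to 0$. The limit objective is again convex with unique minimizer $t_2^*$, so $\tau_2 \to t_2^*$ in probability.

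\textbf{Optimality condition and transfer.} Because $t_2^*\in(0,1)$ is an interior minimizer of a convex function, $1 \in c^*\,\partial \bar F_m(t_2^*)$. If $\bar F_m$ is differentiable there, this reads $\bar f_m(t_2^*) = 1/c^*$, hence
\[
\lfdr_m(t_2^*) = \bar\pi_{0,m}\, c^* = \frac{\tol\bar\pi_{0,m}}{1-\bar F_m(t_1^*)}.
\]
To finish, I would use continuity of $\bar f_m$ near $t_2^*$ and $\tau_2 \to t_2^*$ to get $\lfdr_m(\tau_2) - \lfdr_m(t_2^*) \to 0$. The result holds in probability, which is how I would read the stated limit.

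\textbf{Main obstacle.} Everything depends on $m$: $t_1^*$, $t_2^*$, $\bar F_m$ and $\bar f_m$ all move with $m$. Plain uniqueness of the minimizers therefore does not give consistency. I would make explicit two uniform-in-$m$ conditions that the statement implicitly needs.
\begin{itemize}
\item A well-separatedness condition: $\inf_m \inf_{|t-t_i^*|>\varepsilon} \big(\bar M_m(t) - \bar M_m(t_i^*)\big) > 0$, and similarly for the stage-two objective.
\item Equicontinuity of $\{\bar f_m\}$ near $t_2^*$, with $\bar f_m$ bounded away from zero there.
\end{itemize}
With these in hand, each step above is routine. Without them I would first try to get them from concavity: a concave function with a unique maximizer has a quantified curvature gap once its one-sided derivatives at that point are bounded away from $1/c^*$. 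I would also need to handle the case of a non-differentiable point, where the subgradient condition only gives $\bar f_m(t_2^*\!+) \le 1/c^* \le \bar f_m(t_2^*\!-)$.
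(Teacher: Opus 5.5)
Your proposal is correct and follows the same route as the paper: uniform convergence of the stage-one objective gives $\tau_1\to t_1^*$ (the paper cites Proposition 8 of \cite{soloff2024edge} for this step), the convergence of the random level $\tol/(1-F_m(\tau_1))$ then gives uniform convergence of the stage-two objective and hence $\tau_2\to t_2^*$, and continuity of $\bar f_m$ transfers this to the lfdr. Your subgradient step, which identifies $\lfdr_m(t_2^*)=\tol\bar\pi_{0,m}/(1-\bar F_m(t_1^*))$, and the uniform-in-$m$ separation and equicontinuity conditions you flag are exactly what the paper's proof leaves implicit; it writes $\bar F$ without the subscript $m$, simply assumes $\bar f$ is continuous, and never states the first-order condition.
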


\begin{remark}
    Together with Lemma \ref{lem:second-stage-ineq} stated below, the above theorem shows asymptotic control of the boundary lfdr below the level $\tol/(1-\tol)$. Notably, this result does not require independence between the $p$-values; it instead requires a condition involving the convergence of the empirical cdf to the average cdf. The upshot of this result is that an explicit expression is obtained for how far below the boundary lfdr is from the upper bound of $\tol/(1-\tol)$, asymptotically.
\end{remark}

\begin{proof}
    Define
    \begin{align*}
        H_m(t) &:= \argmin_{t \in [0,1]} \Big\{ t - \frac{\tol F_m(t)}{1-F_m(\tau_1)} \Big\} \\
        H(t) &:= \argmin_{t \in [0,1]} \Big\{ t - \frac{\tol \bar{F}(t)}{1-\bar{F}(t_1^*)} \Big\}
    \end{align*}
The difference $\left\lvert H(t) - H_m(t) \right\rvert$ is equal to
\begin{align*}
    &= \tol \left\lvert \frac{\bar{F}(t)}{1-\bar{F}(t_1^*)} - \frac{F_m(t)}{1-F_m(\tau_1)} \right \rvert \\
    & \leq \tol \Big( \left\lvert \frac{\bar{F}(t)}{1-\bar{F}(t_1^*)} - \frac{F_m(t)}{1-\bar{F}(t_1^*)} \right \rvert + \left \lvert \frac{F_m(t)}{1-\bar{F}(t_1^*)} - \frac{F_m(t)}{1-\bar{F}(\tau_1)}\right \rvert + \left \lvert \frac{F_m(t)}{1-\bar{F}(\tau_1)} - \frac{F_m(t)}{1-F_m(\tau_1)}\right \rvert \Big)
\end{align*}
For the first term, we have
$$\left\lvert \frac{\bar{F}(t)}{1-\bar{F}(t_1^*)} - \frac{F_m(t)}{1-\bar{F}(t_1^*)} \right \rvert \leq \frac{1}{1-\bar{F}(t_1^*)} \|F_m - \bar{F} \|_\infty \rightarrow 0 \text{ in probability.}$$
For the second term, we first note that $\tau_1 \rightarrow t_1^*$ by Proposition 8 of \cite{soloff2024edge}. Thus, since $\bar{F}$ is continuous, we have $ \frac{1}{1-\bar{F}(\tau_1)} \rightarrow \frac{1}{1-\bar{F}(t^*_1)}$ by the Continuous Mapping Theorem. Therefore, we can write:
$$\left \lvert \frac{F_m(t)}{1-\bar{F}(t_1^*)} - \frac{F_m(t)}{1-\bar{F}(\tau_1)}\right \rvert \rightarrow 0$$
For the final term, we have
$$\left \lvert \frac{F_m(t)}{1-\bar{F}(\tau_1)} - \frac{F_m(t)}{1-F_m(\tau_1)}\right \rvert  = F_m(t)\left \lvert \frac{\bar{F}(\tau_1) - F_m(\tau_1)}{(1-\bar{F}(\tau_1))(1-F_m(\tau_1))} \right \rvert \rightarrow 0$$ because $\|F_m - \bar{F}\|_\infty \to 0$.\\
\\
We can then use the argument from Proposition 8 of \cite{soloff2024edge} to conclude $| \tau_2 - t_2^*| \to 0$. If $\bar{f}$ is continuous, we must also have $\lfdr_m(\tau_2) \to \lfdr_m(t_2^*)$.\\
\\
\end{proof}

\begin{lemma}
\label{lem:second-stage-ineq}
Suppose that the average non-null density $\frac{1}{m_1}\sum_{i: H_i\text{ is false}} f^{(i)}(t)$ is positive and non-increasing on $[0,1]$, and the solution $t_1^*$ to $\bar{\pi}_{0,m} / \bar{f}_m(t^*_1) = \bar{\pi}_0\tol$ is unique, where this notation was defined in Theorem \ref{thm:asymptotic-lfdr-control}. Then we have
\begin{align*}
    \frac{\tol \bar{\pi}_{0,m}}{1-\bar{F}_m(t_1^*)} \leq \frac{\tol}{1-\tol}.
\end{align*}
\end{lemma}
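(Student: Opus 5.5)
The plan is to reduce the claim to the bound $t_1^* \leq \tol$. First I will identify $t_1^*$ as the first-stage population minimizer. Write $\bar{F}_m(t) = \bar{\pi}_{0,m}\, t + (1-\bar{\pi}_{0,m}) F_1(t)$, where $F_1$ is the average non-null cdf, whose density $f_1$ is positive and non-increasing. Then $\bar{F}_m$ is concave, and $t \mapsto t - \tol \bar{F}_m(t)$ is convex. The defining equation $\bar{\pi}_{0,m}/\bar{f}_m(t_1^*) = \bar{\pi}_{0,m}\tol$ is the same as $\tol \bar{f}_m(t_1^*) = 1$. This is the first-order condition for the convex function above, so $t_1^*$ is its (unique, by assumption) minimizer over $[0,1]$, exactly as in Theorem \ref{thm:asymptotic-lfdr-control}.

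Next I will extract $t_1^*\le \tol$ from the minimizing property. Comparing with $t=0$, where the objective equals $0$, gives
\begin{align*}
t_1^* - \tol \bar{F}_m(t_1^*) \leq 0 - \tol\bar{F}_m(0) = 0 .
\end{align*}
Since $\bar{F}_m \leq 1$, this yields $t_1^* \leq \tol \bar{F}_m(t_1^*) \leq \tol$.

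Finally, the target inequality is equivalent, since $\bar{F}_m(t_1^*)<1$, to
\begin{align*}
1-\bar{F}_m(t_1^*) \geq \bar{\pi}_{0,m}(1-\tol).
\end{align*}
Expanding the mixture,
\begin{align*}
1-\bar{F}_m(t_1^*) = \bar{\pi}_{0,m}(1-t_1^*) + (1-\bar{\pi}_{0,m})\big(1-F_1(t_1^*)\big) \geq \bar{\pi}_{0,m}(1-t_1^*).
\end{align*}
Combining this with $t_1^* \leq \tol$ gives $1-\bar{F}_m(t_1^*) \geq \bar{\pi}_{0,m}(1-\tol)$. Multiplying by $\tol/\big[(1-\bar{F}_m(t_1^*))(1-\tol)\big]$ gives the claim.

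I expect the only delicate step to be the first: making sure the $t_1^*$ defined by the lfdr equation coincides with the argmin used in Theorem \ref{thm:asymptotic-lfdr-control}. This needs convexity of the objective, which follows from the non-increasing non-null density, and it also needs $t_1^*$ to be interior. If a stationary point failed to exist, I would instead work directly with the argmin definition; the comparison with $t=0$ does not use the derivative condition at all, so the bound $t_1^*\le\tol$ would still hold.
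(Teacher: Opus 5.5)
Your proof is correct, and it takes a different, shorter route than the paper's.

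\textbf{The paper's route.} It works directly from the lfdr equation. It uses the monotone non-null density to get $t_1^*\,\bar f_1(t_1^*)\le \bar F_1(t_1^*)$, and substitutes the explicit value $\bar f_1(t_1^*)=(1-\bar\pi_{0,m}\tol)/\bigl((1-\bar\pi_{0,m})\tol\bigr)$. This gives $\bar F_m(t_1^*)\le \bar F_1(t_1^*)\,(1-\bar\pi_{0,m})/(1-\bar\pi_{0,m}\tol)$. It then bounds $\bar F_1\le 1$ and loosens.

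\textbf{Your route.} You identify $t_1^*$ as the population first-stage minimizer and compare the objective with its value at $t=0$, giving $t_1^*\le \tol\bar F_m(t_1^*)\le\tol$. You then drop the non-null survival term from $1-\bar F_m(t_1^*)$.

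\textbf{How they relate.} Both arguments rest on the same inequality. Given $\tol\bar f_m(t_1^*)=1$, the paper's $t_1^*\bar f_1(t_1^*)\le\bar F_1(t_1^*)$ becomes, after mixing with the uniform null, exactly your $t_1^*\le\tol\bar F_m(t_1^*)$. You could also get it in one line from $\bar F_m(t_1^*)\ge t_1^*\bar f_m(t_1^*)$, using that $\bar f_m$ is non-increasing. That sidesteps the step you flagged as delicate: an integral comparison shows a root of $\tol\bar f_m=1$ minimizes the objective, without needing differentiability or interiority.

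\textbf{What each buys.} Your organization is more transparent: it is the population analogue of $p_{(R_1)}\le\tol$ used in Lemma \ref{lem:expectation-bound}. It is also more general. It only uses that the objective at $t_1^*$ is at most its value at $0$, so it proves the bound for $t_1^*$ defined as the argmin in Theorem \ref{thm:asymptotic-lfdr-control} with no assumption on the non-null density. Monotonicity is needed only to match that argmin with the root of the lfdr equation. The paper's bookkeeping passes through the sharper bound $\tol(1-\bar\pi_{0,m}\tol)/(1-\tol)$ before discarding it. You would recover the same sharper bound by keeping $t_1^*\le\tol\bar F_m(t_1^*)$ instead of loosening to $t_1^*\le\tol$, and then solving $1-\bar F_m(t_1^*)\ge\bar\pi_{0,m}\bigl(1-\tol\bar F_m(t_1^*)\bigr)$ for $\bar F_m(t_1^*)$.
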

\begin{proof}
In this proof, we slightly abuse notation and let $\bar{f}_1$ denote the average non-null density (while $\bar{f}_m$ still denotes the overall average density) and $\bar{F}_1$ denote the average non-null cdf. Since $\bar{f}_1$ is positive and non-increasing, we can write:
\[
\bar{F}_1(t^*_1) = \int_0^{t^*_1} \bar{f}_1(t)\,dt \ge \int_0^{t^*_1} \bar{f}_1(t^*_1)\,dt = t^*_1 \cdot \bar{f}_1(t^*_1)
\]
Thus, we have
\begin{align}
\label{ineq:tstar1}
t^*_1 \le \frac{\bar{F}_1(t^*_1)}{\bar{f}_1(t^*_1)} 
\end{align}
Because $t^*_1$ satisfies $\bar{\pi}_{0,m} / \bar{f}(t^*_1) = \bar{\pi}_{0,m}\tol$, 
we can also write:
\[
\bar{\pi}_{0,m} + (1 - \bar{\pi}_{0,m})\bar{f}_1(t^*_1) = \bar{f}(t^*_1) = \tol^{-1},
\]
Rearranging, we get:
\begin{align}
\label{eq:tstar2}
\bar{f}_1(t^*_1) = \frac{1 - \bar{\pi}_{0,m} \tol}{(1 - \bar{\pi}_{0,m}) \tol} .
\end{align}
We now have:
\begin{align*}
    \bar{F}(t^*_1) &= \bar{\pi}_{0,m} t_1^* + (1 - \bar{\pi}_{0,m})\bar{F}_1(t_1^*) \\
    & \leq \bar{F}_1(t_1^*)\left(\frac{\bar{\pi}_{0,m}}{\bar{f}_1(t_1^*)} + \left(1 - \bar{\pi}_{0,m} \right)\right) \space \tag{by \eqref{ineq:tstar1}} \\ 
    &= \bar{F}_1(t_1^*)\left(1 - \bar{\pi}_{0,m} \right) \left(\frac{\bar{\pi}_{0,m} \tol}{1 - \bar{\pi}_{0,m} \tol} + 1 \right) \space \tag{by \eqref{eq:tstar2}} \\
    & \leq \left(1 - \bar{\pi}_{0,m} \right) \left(\frac{\bar{\pi}_{0,m} \tol}{1 - \bar{\pi}_{0,m}} + 1 \right) \\
    &= 1 - \bar{\pi}_{0,m} \left(1 - \tol \right)
\end{align*}
We can rearrange this to obtain:
\[
\frac{\bar{\pi}_{0,m}}{1 - \bar{F}_m(t_1^*)} \le \frac{1}{1 - \tol} ,\quad 
\]
as desired.
\end{proof}



\bibliographystyle{dcu}
\bibliography{reference}

\end{document}